\newcommand{\R}{\mathbb R}
\newcommand{\Z}{\mathbb Z}
\newcommand{\spn}{\text{span}_{\R}}
\newcommand{\im}{\mathrm{Im}}
\renewcommand{\Re}{\mathsf{I\!Re}}
\renewcommand{\Im}{\mathsf{I\!Im}}
\newcommand{\partialk}{\frac{\partial}{\partial\theta_k}}
\newcommand{\orthog}{^{\perp_{\Re\langle\cdot,\cdot\rangle}}}
\newcommand{\U}{\text{U}}
\newcommand{\SU}{\text{SU}}
\newcommand{\PSU}{\text{PSU}}
\newcommand{\PU}{\text{PU}}
\newcommand{\CU}{\text{CU}}
\newcommand{\GL}{\text{GL}}
\newcommand{\LT}{\text{LT}}
\newcommand{\un}{\mathfrak{u}}
\newcommand{\su}{\mathfrak{su}}
\newcommand{\tr}{\mathsf{Tr}}
\newcommand{\dagg}{^{\dagger}}
\newcommand{\Vct}{V_{\mathsf{ct}}}
\newcommand{\ct}{{\mathsf{ct}}}
\newcommand{\btheta}{{\mathbold{\theta}}}
\newcommand{\cart}{\mathsf{cart}}
\newcommand{\sequ}{\mathsf{sequ}}
\newcommand{\spin}{\mathsf{spin}}
\newcommand{\cng}{\gategroup[2,steps=1,style={dashed,rounded corners,fill=white, inner xsep=2pt},background]{}}
\newcommand{\cnq}{\gategroup[3,steps=1,style={dashed,rounded corners,fill=white, inner xsep=2pt},background]{}}
\newtheorem{lemma}{Lemma}
\newtheorem{theorem}{Theorem}
\newtheorem{remark}{Remark}
\newtheorem{conjecture}{Conjecture}
\newtheorem{definition}{Definition}
\newcommand{\cmt}[1]{{\color{black}{ #1 }}}
\def\@fnsymbol#1{\ensuremath{\ifcase#1\or *\or **\or
   \mathsection\or \mathparagraph\or \|\or **\or \dagger\dagger
   \or \ddagger\ddagger \else\@ctrerr\fi}}
\newcommand{\rev}{}
\begin{document}

\title{\sf Best Approximate Quantum Compiling Problems}

\author{Liam Madden\thanks{Email: Liam.Madden@colorado.edu, University of Colorado Boulder}\qquad Andrea Simonetto\thanks{Email: {andrea.simonetto@ensta-paris.fr}, {UMA, ENSTA Paris, Institut Polytechnique de Paris, 91120 Palaiseau, France. The authors performed part of the work while at IBM Quantum, IBM Research Europe.}}}

\maketitle
\begin{abstract}
  We study the problem of finding the best approximate circuit that is the closest (in some pertinent metric) to a target circuit, and which satisfies a number of hardware constraints, like gate alphabet and connectivity. We look at the problem in the CNOT$+$rotation gate set from a mathematical programming standpoint, offering contributions both in terms of understanding the mathematics of the problem and its efficient solution. Among the results that we present, we are able to derive a 14-CNOT 4-qubit Toffoli decomposition from scratch, and show that the Quantum Shannon Decomposition can be compressed by a factor of two without practical loss of fidelity. 
\end{abstract}




\maketitle

\section{Introduction}

With the steady advances in quantum hardware and volume~\cite{jurcevic2020demonstration}, quantum computing is well on track to become widely adopted in science and technology in the near future. One of the core challenges to enable its use is the availability of a flexible and reliable quantum compiler, which can translate any target quantum circuit into a circuit that can be implemented on real hardware with gate set, connectivity, and length limitations. 

Since the celebrated Solovay-Kitaev theorem~\cite{nielsenbook,dawson2005}, quantum compiling has been a rich research area. Works have investigated how to efficiently map different gates into canonical (universal) gate sets up to an arbitrary accuracy\cmt{~\cite{kliuchnikov2012fast,ross2014,sarnakletter,selinger2013nqubit,redu2008,redu2013,redu2018,vala2004}}, or how to ``place'' the target circuit onto the real connectivity-limited hardware~\cmt{\cite{place2008,place2017,place2018,place2018b,place2019,place2019b,place2019c,place2019d,tket,place2020,place2020b,murali2019full,tan2020optimal,rogers2021synthesis,Giacomo2021}}.

A more holistic strategy in quantum compiling has been the construction of universal parametric circuits to serve as templates to compile any target circuit. This line of research, which we will call decomposition-based, focuses primarily on templates based on the versatile CNOT$+$rotation gate set~\cmt{\cite{barenco1995,bullock2003arbitrary,shende2004lowerbound,shende2006,vatan2004optimal,drury2008constructive,nakajima2005new}}. These works unveiled fundamental lower bounds on the number of CNOTs that almost all target circuits require in order to be compiled in such a gate set and delivered a constructive method for doing so, the quantum Shannon decomposition (QSD), which was only a factor of two off from the lower bound of efficiency. This research is formalized mainly in the language of Lie theory as a recursive sequence of Cartan decompositions. Despite the constructive and sound theory, the QSD decomposition does not have the flexibility of trading off precision and length.

The quantum compiling problem in its essence can be cast as an optimization problem in the space of unitary matrices. Here one needs to find a unitary matrix that can be realized in hardware (with various constraints, e.g., gate set and connectivity) that is the ``closest'' to a target unitary (i.e., the circuit that one wants to realize, or compile). Here ``closest'' is intended with respect to a pertinent metric. On the one hand, this optimization problem could encompass the whole quantum compiling research; on the other hand, it is a very difficult mathematical problem and even for a small number of qubits cannot be stored in memory. Recently, a series of papers~\cite{cincio2018,khatri2019}, have revised this optimization-based compiling approach with some simplifying assumptions and heuristics, \cmt{and have introduced the idea of computing the cost and its gradients in a quantum-assisted way}. \cmt{With the same optimization lens, but with other classical heuristics, the works~\cite{younis2020qfast,younis2021qfast} have looked at hierarchical compilations, whereby one need not compile a unitary directly to a two-qubit gate circuit, but can instead start with higher-qubit gates, and then down recursively to the two-qubit gate target.   } \rev{Finally, the recent works \cite{squander,rakyta2021approaching} arrange the CNOTs to explicitly decouple the qubits one at a time using a particular cost function. Moreover, they provide numerical evidence that their approach can compile arbitrary target circuits very close to the lower bound on the number of CNOTs.}

In this paper, we aim at analyzing the optimization approach in more depth and offering some sound evidence on the justification of critical assumptions and solution methods. Further, we aim at helping to bridge the gap between Cartan decomposition-based research and optimization-based compiling. Our approach consists in formulating best approximate compiling problems, which trade-off exact compilation with constraint violation. In particular, we offer the following contributions.

\begin{itemize}
    \item We start by analyzing the quantum compiling mathematical problem in the CNOT$+$rotation gate set and show how to construct versatile ``CNOT units,'' i.e., elementary building blocks, that can be used to parametrize any circuit of a given number of qubits;
    \item We show that optimizing over the parametrized circuit consists of optimizing the structure (i.e., where to place the CNOT units) and optimizing the rotation angles of the rotation gates. We show that the former is largely unimportant once past the so-called surjectivity bound (even when imposing hardware constraints), while the latter is easy from an optimization perspective, by using e.g., Nesterov's accelerated gradient descent~\cite{nesterov1983,beck2009fast};
    \item With the intention to further compress the compiled circuit, allowing for approximation errors in terms of gate fidelity, we propose a novel regularization based on group LASSO~\cite{GLasso2013}, which (among other things) can reduce the length of the QSD down to the CNOT theoretical lower bound without affecting fidelity noticeably (i.e., a factor of two compression);  
    \item Various numerical results support our findings. In particular, we showcase how to use our approach to discover new decompositions in the CNOT$+$rotation gate set of special gates (e.g., the Toffoli gate) and how to use the compression mechanism as an extension of any compilation code available (e.g., Qiskit transpile~\cite{Qiskit2019}) that can trade-off accuracy for circuit \rev{\textit{length}, where length is the number of CNOTs.}  
\end{itemize}

While discussing the main contributions, the paper focuses on three complementary goals,
\begin{itemize}
    \item[\textbf{[G1]}] Approximate quantum compiling for random unitary matrices: here the goal is to derive results in terms of the number of CNOTs to use to compile any given random unitary matrix, with connectivity constraints;
    \item[\textbf{[G2]}] Approximate quantum compiling for special gates: here the goal is to derive results in terms of the number of CNOTs to use to compile special gates;
    \item[\textbf{[G3]}] Approximate quantum compiling for circuit compression: here the goal is to derive results that allow for compression of circuits, freeing the possibility to have inexact compilation. 
\end{itemize}

\textbf{Organization. } This paper is organized as follows. In Section~\ref{sec:preliminaries}, we report the mathematical and physical preliminaries to our algorithmic development. In Section~\ref{sec:math-opt}, we formalize the approximate quantum compiling problem as a mathematical program. In Section~\ref{sec:programmable}, we devise a programmable \cmt{unit} (the two-qubit CNOT unit) with which we can build any circuit. In Section~\ref{sec:propr}, we discuss the property of the mathematical program introduced in Section~\ref{sec:math-opt} when specified for the \cmt{parametric circuit} presented in Section~\ref{sec:programmable}. 

From Section~\ref{sec:struc}, we focus on particular layout patters, namely sequential, spin, and Cartan, and we present their properties. In Section~\ref{sec:gd}, we discuss the use of gradient descent to optimize the rotation angles once the layout is fixed, and we showcase numerical results in Sections~\ref{subsec:random}-\ref{subsec:toffoli}. 
Section~\ref{sec:prune} discusses our proposed compression strategy to trade-off accuracy and \cmt{length}, both theoretically and numerically.
We then conclude in Section~\ref{sec:concl}. 
\section{Preliminaries}
\label{sec:preliminaries}

We work with $n$-qubit quantum circuits, which are represented either by a collection of ordered gate operations, or by a $d$ by $d$ unitary matrix with $d=2^n$. The class of unitary matrices of dimension $d \times d$ together with the operation of matrix multiplication have an important group structure~\cite{knapp2013lie}, denoted as the unitary group $\U(d)$. In particular, the group is a Lie group of dimension $d^2$ as a real manifold; we let $\un(d)$ be its Lie algebra, which consists of anti-Hermitian matrices. Unitary matrices with determinant equal to $1$ are called special unitary matrices. Special unitary matrices of dimension $d \times d$ together with the operation of matrix multiplication form the the special unitary group of degree $d$: $\SU(d)$, which is a Lie group of dimension $d^2-1$ as a real manifold. We let $\su(d)$ denote its Lie algebra, which consists of traceless anti-Hermitian matrices. 

A useful property of the determinant $\det(\cdot)$ of any squared $d$ by $d$ matrix $A$ and scalar $c$ is that $\det(cA)=c^d A$. Hence if $U\in\U(d)$ then $U/\det(U)^{1/d}\in \SU(d)$. Since scalars (e.g., global phases in quantum computing) are easy to implement on a quantum computer (and in fact unimportant), we normalize unitary matrices as above, and so we only need to know how to ``work with'' special unitary matrices. We remark that if $U\in \SU(d)$, then adding any global phase multiple of $2\pi/d$ does not alter the matrix determinant, i.e., $e^{i\frac{2\pi\,m}{d}} U\in\SU(d)$ as well for any integer $m\in \Z$. The equivalence classes from this relation form the projective special unitary group, $\PSU(d)$, which is isomorphic to the projective unitary group, $\PU(d)\coloneqq \U(d)/\U(1)$. Thus, compiling a matrix from $\SU(d)$ also provides a compilation for its $d-1$ equivalent matrices in $\PSU(d)$.

A single-qubit gate on the $j$th qubit is a unitary matrix of the form $I_{2^{j-1}}\otimes u \otimes I_{2^{n-j}}$ where $u\in \U(2)$, $I_q$ is the identity matrix of dimension $q$, and $\otimes$ represents the Kronecker product. An important set of matrices in $\U(2)$ is the set of Pauli matrices, which we denote with their usual notation as $X, Y, Z$, which are unitary, Hermitian, traceless, and have determinant equal to $-1$. From the Pauli matrices, one obtains the rotation matrices $R_x(\theta), R_y(\theta), R_z(\theta)$ by matrix exponentiation. The rotation matrices are special unitary. 
An important fact that we use extensively in this paper is that any $u\in \SU(2)$ can be written as a product of any three rotation matrices with no two consecutive the same~\cite{nielsenbook}. 

Among multiple-qubit gates, we focus on the two-qubit CNOT gate, or controlled-$X$ gate, with control qubit $j$ and target qubit $k$, which is the matrix
\begin{align*}
    &\textrm{CNOT}_{jk} =
    I_{2^{j-1}}\otimes \begin{bmatrix}
    1 & 0\\
    0 & 0
    \end{bmatrix}
    \otimes I_{2^{n-j}}\\
    &\hspace{2cm}+\begin{cases}
    I_{2^{j-1}}\otimes \begin{bmatrix}
    0 & 0\\
    0 & 1
    \end{bmatrix}
    \otimes I_{2^{k-j-1}}\otimes X\otimes I_{2^{n-k}} & \text{ if } j<k\\
    I_{2^{k-1}}\otimes X\otimes I_{2^{j-k-1}} \otimes \begin{bmatrix}
    0 & 0\\
    0 & 1
    \end{bmatrix}
    \otimes I_{2^{n-j}} & \text{ if } k<j
    \end{cases}.
\end{align*}
For example, for $n=2$, we have
\begin{align*}
    \textrm{CNOT}_{12} = \begin{bmatrix}
    1&0&0&0\\0&1&0&0\\0&0&0&1\\0&0&1&0
    \end{bmatrix}, \qquad
    \textrm{CNOT}_{21} = \begin{bmatrix}
    1&0&0&0\\0&0&0&1\\0&0&1&0\\0&1&0&0
    \end{bmatrix}.
\end{align*}
Note that for $n=2$, CNOT has determinant $-1$, and so we have to normalize it. On the other hand, for $n>2$, CNOT has determinant 1.

\section{Approximate quantum compiling as mathematical optimization}
\label{sec:math-opt}

We are interested in compiling a quantum circuit, which we formalize as finding the ``best'' circuit representation in terms of an ordered gate sequence of a target unitary matrix $U\in \U(d)$, with some additional hardware constraints. In particular, we look at representations that could be constrained in terms of hardware connectivity, as well as \cmt{circuit length}, and we choose a gate basis in terms of CNOT and rotation gates. The latter choice is motivated by an implementation in the Qiskit software package~\cite{Qiskit2019}. We recall that the combination of CNOT and rotation gates is universal in $\SU(d)$ and therefore it does not limit compilation~\cite{nielsenbook}.

To properly define what we mean by ``best'' circuit representation, we define the metric as the Frobenius norm between the unitary matrix of the compiled circuit $V$ and the target unitary matrix $U$, i.e., $\|V - U\|_{\mathrm{F}}$. This choice is motivated by mathematical programming considerations, and it is related to other formulations that appear in the literature (see Remark~\ref{rem:fid}). 

We are now ready to formalize the approximate quantum compiling problem as follows. 

\emph{Given a target special unitary matrix $U \in \SU(2^n)$ and a set of constraints, in terms of connectivity and \cmt{length}, find the closest special unitary matrix $V \in \mathcal{V} \subseteq \SU(2^n)$, where $\mathcal{V} $ represents the set of special unitary matrices that can be realized with rotations and CNOT gates alone and satisfy both connectivity and \cmt{length} constraints, by solving the following mathematical program: }
\begin{equation}\label{eq:aqcp}
\textbf{(AQCP)}\qquad \min_{V \in \mathcal{V} \subseteq SU(2^n)} \, f(V):=\frac{1}{2}\|V - U\|_{\mathrm{F}}^2.
\end{equation}

\rev{Note that the cost function can be equivalently written
\begin{align}
    \frac{1}{2}\|V-U\|_F^2 &= \frac{1}{2}\tr[U\dagg U-U\dagg V-V\dagg U+V\dagg V] = \frac{1}{2}\tr[2I]-\frac{1}{2}\tr[U\dagg V]-\frac{1}{2}\tr[V\dagg U]\notag\\
    &= d-\Re~\tr[U\dagg V]\label{eq:retr}
\end{align}
where we used that $U$ and $V$ are unitary matrices.}

We call~\eqref{eq:aqcp} the approximate quantum compiling (master) problem (AQCP). A solution of the problem is an optimal $V$ indicated as $V^*$, along with an ordered set of gate operations that respect the constraints.  

If $V^*$ is such that the cost is null, then we say that the compilation is exact, otherwise it is approximate and the approximation error is computed by the cost $\frac{1}{2}\|V^* - U\|_{\mathrm{F}}^2$.  Without further specifications (and simplifications), Problem~\eqref{eq:aqcp} is intractable but for very small circuits. How to efficiently solve~\eqref{eq:aqcp} is our aim. 


\begin{remark}\label{rem:fid}
In~\cite{khatri2019}, a different metric was used, namely the Hilbert-Schmidt test defined as 
$$
C_{HST}(U,V) = 1 - \frac{1}{d^2} |\tr[V^\dag U]|^2 = \frac{d+1}{d}(1 - \bar{F}(U,V)), 
$$
where $\bar{F}(U,V)$ is the fidelity averaged over the Haar distribution. Given that $f(V) = \frac{1}{2}\|V - U\|_{\mathrm{F}}^2 = d - \Re \tr[V^\dag U]$, then $\Re \tr[V^\dag U] = d - f(V)$. Hence,
\begin{multline*}
\bar{F}(U,V) = 1 - \frac{d}{d+1}  C_{HST}(U,V) = 1- \frac{d}{d+1} + \frac{1}{d(d+1)} ((\Re \tr[V^\dag U])^2 + (\Im \tr[V^\dag U] )^2) \geq \\  1- \frac{d}{d+1} + \frac{1}{d(d+1)} (d - f(V))^2 =: \bar{F}_{\mathrm{F}}(U,V),
\end{multline*}
where we have defined  the new quantity $\bar{F}_{\mathrm{F}}(U,V)$ as the Frobenius fidelity, which is always lower than the $\bar{F}(U,V)$. By minorization arguments, minimizing $f(V)$ has the effect of maximizing the \rev{Frobenius} fidelity.
\end{remark}

The stepping stones of our work are the papers~\cite{cincio2018,khatri2019}. There, the AQCP is approached by a bi-level technique. Since $V$ needs to be sought in the space of matrices that can be realized with rotation and CNOT gates, one can solve for $V$ by interleaving an optimization on the structure (at fixed \cmt{length}), i.e., which gate needs to be implemented where, and an optimization on the rotation angles. The first problem (deciding the structure) is combinatorial and non-convex, and \cmt{\cite{cincio2018,khatri2019}} propose a meta-heuristic based on simulated-annealing and compactifications, while the second problem (deciding the rotation angles) is continuous and non-convex, and \cmt{\cite{cincio2018,khatri2019} include} a gradient descent algorithm. \cmt{\cite{cincio2018,khatri2019}} also offer an algorithm to fix the CNOT structure, more or less arbitrarily, and optimize only for rotation angles, which is more efficient in terms of computation time, but less optimal. 

Despite the encouraging results, key challenges in these works remain. First, optimizing the structure via simulated-annealing is far from optimal and it can be very time consuming. But most importantly it is physics-agnostic, therefore one may wonder if one could do better by using physics. 

Second, fixing the structure, as is done in~\cite{khatri2019}, is by no means universal nor does it mirror the hardware connectivity; a better structure could be a sequential structure (see Figure~\ref{fig:str-0}).

\begin{figure}
\centering
\includegraphics[width=0.7\textwidth]{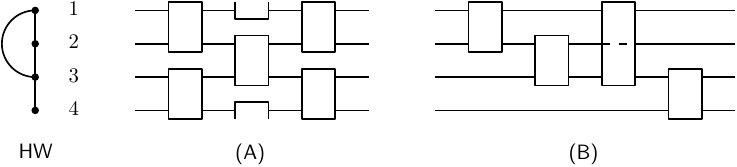}
\caption{Possible two-qubit gate units structures. HW depicts the hardware connectivity of a four-qubit quantum computer. (A) the structure, or ansatz, proposed in~\cite{khatri2019}, where each block represents a CNOT gate with rotation gates before and after, is easily implementable but may miss some hardware connections or introduce some that are not there; (B) a possible sequential structure, which alternates among all the possible two-qubit blocks and could capture hardware connectivity better.}
\label{fig:str-0}
\end{figure}

Third, the gradient descent algorithm used in~\cite{khatri2019} seems to work surprisingly well, despite the non-convexity of the problem at hand, and one may wonder why this is so, which is a non-trivial question. 

The rest of this paper answers these questions by using a pertinent parametrization of the matrix $V$ in terms of CNOTs and rotation gates. 

\section{A programmable two-qubit gate and the CNOT unit}
\label{sec:programmable}

We start by defining a flexible two-qubit gate, which is parametrized by four rotation angles, and which \cmt{will} be the building block of \cmt{the parametric circuit}. 

Let us focus for now on basis set constraints alone. Given the target special unitary $U\in\SU(d)$, compiling means to find a product of matrices from the basis set (as said CNOT$+$rotations) that either approximates or equals $U$. We define the \cmt{``length''} of the compilation as the number of CNOTs in the product. In 2004, Shende, Markov, and Bullock derived a universal lower bound on the length for exact compilation in this setting~\cite{shende2004lowerbound}. The bound specifies a \emph{sufficient} minimal number of CNOTs for exact compilation, for \emph{all} the matrices in $\SU(d)$. And, while a particular special unitary matrix may be exactly compiled with a smaller length than the universal lower bound (the bound is not \emph{necessary}), there exists a special unitary matrix that cannot (the bound is sufficient and necessary for at least one special unitary matrix). Moreover, the set of special unitary matrices that can be compiled with a smaller length than the universal lower bound has measure zero in $\SU(d)$. 

We summarize the proof of this fact, as it motivates our parametrization ideas.

\begin{lemma}{\cite[Prop. III.1]{shende2004lowerbound}} The set of special unitary matrices $U \in \SU(d)$, $d = 2^n$, that do not need at least $\big\lceil \frac{1}{4}(4^n-3n-1)\big\rceil$ CNOTs to be exactly compiled has measure zero in $\SU(d)$.
\end{lemma}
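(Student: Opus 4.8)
The plan is a dimension-counting argument. The key idea is that a circuit with $\ell$ CNOTs and single-qubit rotations interspersed spans a subset of $\SU(d)$ whose dimension (as a real manifold) can be bounded above by counting the free rotation parameters, and if this dimension is strictly less than $\dim \SU(d) = d^2 - 1 = 4^n - 1$, then the image of the corresponding parametrization map has measure zero by Sard's theorem (the image of a smooth map from a lower-dimensional manifold has measure zero in the target).

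First I would fix a circuit topology: a product of $\ell$ CNOTs, with arbitrary single-qubit gates placed before the first CNOT, between consecutive CNOTs, and after the last CNOT. Each single-qubit gate on one of the $n$ qubits contributes at most $3$ real parameters (by the fact quoted in the excerpt that any element of $\SU(2)$ is a product of three rotations, plus we may ignore global phase). A full layer of single-qubit gates across all $n$ qubits thus contributes at most $3n$ parameters. However, a CNOT acts on only two qubits, so the single-qubit gates sitting between two consecutive CNOTs that act on disjoint qubit pairs can be partially absorbed; the standard accounting (following Shende–Markov–Bullock) is: the initial layer gives $3n$, and each of the $\ell$ CNOTs together with the rotations that follow it on its two active qubits contributes at most $4$ new parameters (three per qubit on two qubits is $6$, but one diagonal direction on each of the two qubits commutes through the CNOT and is absorbed into the previous layer, leaving $6 - 2 = 4$). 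So the total parameter count is at most $3n + 4\ell$.

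Next I would sum over all finitely many choices of CNOT placements (there are at most $(n(n-1))^\ell$ topologies for a length-$\ell$ circuit, a finite number) and over all $\ell' \le \ell$: the set of special unitaries exactly compilable with at most $\ell$ CNOTs is a finite union of images of smooth maps from Euclidean spaces of dimension at most $3n + 4\ell$. A finite union of measure-zero sets is measure zero, so if $3n + 4\ell < 4^n - 1$, i.e. $\ell < \frac{1}{4}(4^n - 3n - 1)$, the compilable set has measure zero. Equivalently, the set of matrices that do \emph{not} require at least $\lceil \frac{1}{4}(4^n - 3n - 1) \rceil$ CNOTs has measure zero, which is the claim.

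The main obstacle is making the parameter-absorption bookkeeping rigorous — i.e. justifying the "$4$ per CNOT" rather than the naive "$6$ per CNOT". This requires showing that a $Z$-rotation on the control qubit and a $Z$-rotation on the target qubit (or the appropriate commuting directions) placed immediately after a CNOT can be pushed back through it and merged into the preceding single-qubit layer without increasing the total count, so that only a $2$-parameter (for the control) plus $2$-parameter (for the target) residue genuinely follows each CNOT. One must also handle the first layer and the telescoping carefully so no double counting occurs. Everything else — Sard's theorem, finiteness of the union, the arithmetic of the ceiling — is routine.
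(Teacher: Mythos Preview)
Your proposal is correct and follows essentially the same route as the paper: Sard's theorem plus the $3n+4\ell$ parameter count obtained by commuting one rotation through each side of every CNOT. One small correction: it is an $R_x$ (not $R_z$) that commutes with the \emph{target} of a CNOT, so the absorbed directions are $R_z$ on control and $R_x$ on target---your parenthetical ``or the appropriate commuting directions'' already signals you anticipated this.
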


\begin{proof}
First, since each rotation gate has $1$ real parameter while $\SU(d)$ has real dimension $d^2-1$, Sard's theorem~\cite{lee2013smooth} can be used to show that the set of special unitary matrices that do not need $d^2-1$ rotation gates to be exactly compiled has measure zero in $\SU(d)$ \cite[Lem. II.2]{shende2004lowerbound}. Thus, if a product of matrices from the basis set can be reduced to a product with less than $d^2-1$ rotation gates, then it is in the measure zero set. In particular, more than $3$ consecutive rotation gates on the same qubit in a product can be reduced to only $3$ rotation gates. So, without CNOTs, any product can be reduced to one with only $3n<d^2-1$ rotation gates. Furthermore, $R_z$ commutes with the control of CNOTs and $R_x$ commutes with the target of CNOTs. Thus, whenever $3$ rotation gates are applied after a CNOT on either the control or target qubit, we can rewrite them as $3$ rotation gates such that one of the rotation gates commutes with the CNOT. Thus, we can reduce any product of matrices from the basis set to a product with $3$ rotation gates applied to each qubit, followed by CNOTs, each followed by only $4$ rotation gates. Thus, a product with $L$ CNOTs can be reduced to a product with $4L+3n$ rotation gates. So, the set of special unitary matrices that do not need at least $\big\lceil \frac{1}{4}(4^n-3n-1)\big\rceil$ CNOTs to be exactly compiled has measure zero in $\SU(d)$ \cite[Prop. III.1]{shende2004lowerbound}.
\end{proof}

Therefore, for an arbitrary $n$-qubit circuit, one would need at least $\Omega(4^n)$ CNOT gates for exact compilation. Henceforth, we refer to the bound $\big\lceil \frac{1}{4}(4^n-3n-1)\big\rceil$ as the theoretical lower bound, or TLB for short. Figure~\ref{fig.expl} gives a glimpse \cmt{of} how the reduction in terms of CNOTs and rotations can be carried out for a 3-qubit circuit.

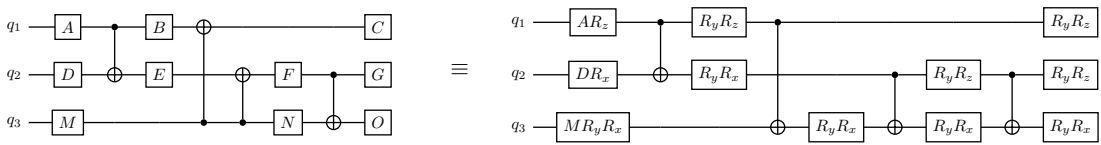
\begin{figure}[th]
\centering
\begin{tikzpicture}
\node[scale=.62] {
\begin{quantikz}
\lstick{$q_1$} & 
 \gate{A} & \ctrl{1} & \gate{B} & \targ{} & \qw & \qw & \qw & \gate{C} & 
 \rstick{} \\
\lstick{$q_2$} & 
 \gate{D} & \targ{} & \gate{E} & \qw & \targ{} & \gate{F} & \ctrl{1} & \gate{G} &
\rstick{}\\
\lstick{$q_3$} & 
 \gate{M} & \qw & \qw &  \ctrl{-2} & \ctrl{-1} & \gate{N} & \targ{} & \gate{O} &
\rstick{}
\end{quantikz} \Large $\quad \equiv \quad$ \normalsize
\begin{quantikz}
\lstick{$q_1$} & 
 \gate{A \rev{R_z}} & \ctrl{1} & \gate{R_y R_z} & \ctrl{2} & \qw & \qw & \qw & \qw & \gate{R_y R_z} & 
 \rstick{} \\
\lstick{$q_2$} & 
 \gate{D \rev{R_x}} & \targ{} & \gate{R_y R_x} & \qw& \qw & \ctrl{1} & \gate{R_y R_z} & \ctrl{1} & \gate{R_y R_z} &
\rstick{}\\
\lstick{$q_3$} & 
 \gate{M \rev{R_y R_x}} & \qw & \qw &  \targ{} &\gate{R_y R_x} &\targ{} & \gate{R_y R_x} & \targ{} & \gate{R_y R_x} &
\rstick{}
\end{quantikz}};
\end{tikzpicture}
\caption{An example of an equivalent quantum circuit with CNOT and rotations gates, all the capital letter gates, e.g. $A$, are single-qubit unitaries that can be compiled via three rotation gates up to an irrelevant global phase. \rev{Note that the $R_y$ next to $M$ results from flipping the two upward-facing CNOTs and has angle $-\pi/2$.}}
\label{fig.expl}
\end{figure}

While \cite{shende2004lowerbound} only proved a lower bound, their reasoning motivates \cmt{a flexible parametric circuit construction}. Since every product of matrices from the basis set can be reduced to a product of the form described in the previous paragraph (see also Figure~\ref{fig.expl}), we will consider products of that form with different sequences of CNOTs. Let $\CU_{j\to k}(\theta_1,\theta_2,\theta_3,\theta_4)$ be the matrix represented by 

\begin{center}
\begin{quantikz}
\lstick{$j$} & \ctrl{1} & \gate{R_y(\theta_1)} & \gate{R_z(\theta_2)} & \qw \\ \lstick{$k$} & \targ{} & \gate{R_y(\theta_3)} & \gate{R_x(\theta_4)} & \qw & \rstick{.}
\end{quantikz}
\end{center}

We call this a ``CNOT unit'' and we  use it as a programmable two-qubit block in \cmt{the parametric circuit}. Define
$$
J(L)=\{(j_1\to k_1,...,j_L\to k_L)\mid j_m<k_m\in \{1,\ldots,n\}\cmt{\text{ for all }}m\in\{1,\ldots,L\}\},
$$ 
as the set of $L$-long lists of pairs of control-target indices with the control index smaller than the target index. The latter constraint is without loss of generality since a CNOT can be flipped with $y$-rotations \rev{($R_y(\pi/2)$ on the left of the control and right of the target, and $R_y(-\pi/2)$ on the right of the control and left of the target)}. 
The set $J(L)$ represents all the possible CNOT unit sequences that one can have of length $L$. In Figure~\ref{fig.expl}, we have one element of $J(4)$ as $(1\to2,1\to3,2\to3,2\to3)$.

The cardinality of $J(L)$ can be shown to be $|J(L)|=(n(n-1)/2)^L$. Given an element $\ct\in J(L)$ (that is, an $L$-long list of pairs defining the location of our CNOT units, for example in Figure~\ref{fig.expl} $\ct = (1\to2,1\to3,2\to3,2\to3)$) we define $\ct(i)$ as the pair at position $i$, and we let $\Vct:[0,2\pi)^{3n+4L}\to\SU(2^n)$ be the following circuit
\begin{align*}
    \Vct(\btheta)=& \CU_{\ct(L)}(\theta_{3n+4L-3},...,\theta_{3n+4L})\cdots \CU_{\ct(1)}(\theta_{3n+1},...,\theta_{3n+4})\\
    &[R_z(\theta_{1})R_y(\theta_{2})R_z(\theta_{3})]\otimes\cdots\otimes [R_z(\theta_{3n-2})R_y(\theta_{3n-1})R_z(\theta_{3n})],
\end{align*}
where we have collected all the angles $(\theta_1, \ldots, \theta_{3n+4L})$ into the vector $\btheta$.  The circuit $\Vct$ corresponds to the reduced product in the lower bound proof and will be the main object of our study: it will be the circuit blueprint for any circuit realizable in hardware (see also Figure~\ref{fig:str-1}).

\begin{figure}
\centering
\includegraphics[width=0.6\textwidth]{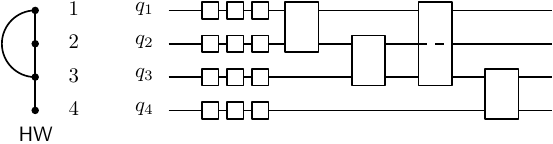}
\caption{Decomposition of a realizable quantum circuit in terms of $\Vct(\btheta)$. The first three one-qubit gates for all qubits are rotation gates, $R_z, R_y, R_z$, while the two-qubit blocks represents CNOT units as specified. Here $L = 4$, \cmt{hence there are} $28$ rotation angles variables. \cmt{Moreover,} $\ct = (1\to 2,2\to 3,1\to 3,3\to 4)$, \cmt{so} the hardware connectivity \cmt{is satisfied}.}
\label{fig:str-1}
\end{figure}


\begin{remark}\label{rem:qsd}
A useful circuit decomposition is the quantum Shannon decomposition (QSD), proposed in 2006 by
Shende, Bullock, and Markov, which uses $\frac{23}{48}4^n-\frac{3}{2}2^n+\frac{4}{3}$ CNOTs to decompose any $n$-qubit circuit \cite{shende2006}. This number of CNOTs is \cmt{only twice the TLB.} The resulting circuit is formalized as a sequence of recursive Cartan decompositions in \cite{drury2008constructive}, and Theorems 4, 8, and 12 in \cite{shende2006} can be recursively used to find an explicit $\ct$ \cmt{corresponding to the QSD}. For $n=3$, $\ct=(1\to 2, 1\to 2, 1\to 2, 2\to 3, 1\to 3, 2\to 3, 1\to 2, 1\to 2, 1\to 2, 2\to 3, 1\to 3, 2\to 3, 1\to 3, 1\to 2, 1\to 2, 1\to 2, 2\to 3, 1\to 3, 2\to 3, 1\to 2, 1\to 2, 1\to 2)$ \cite[Fig. 3]{drury2008constructive}. 
\end{remark}

At this point, we can already reformulate the approximate quantum compiling problem~\eqref{eq:aqcp}, in the equivalent form
\begin{equation}\label{eq:aqcp1}
\textbf{(AQCP-CT)}\qquad \min_{L \in [1,\ldots,\bar{L}], \btheta\in [0, 2\pi)^{3n+4L}, \ct \in \mathcal{C}(L)} \, f_{\ct}(\btheta):=\frac{1}{2}\|\Vct(\btheta) - U\|_{\mathrm{F}}^2,
\end{equation}
where the set $\mathcal{C}(L)$ represents the set of \rev{$L$-long} realizable lists in hardware (for connectivity limitations), and $\bar{L}$ is the \cmt{length} limit. 

Problem~\eqref{eq:aqcp1} is by no means easier than the original Problem~\eqref{eq:aqcp}. However, it is instructive to understand its properties and this will help us devise better solution strategies. For instance, if we were to drop hardware constraints and adopt the QSD strategy of Remark~\ref{rem:qsd}, then Problem~\eqref{eq:aqcp1} could simplify into:
\begin{equation}\label{eq:aqcpqsd}
\textbf{(AQCP-QSD)}\qquad \min_{L = \frac{23}{48}4^n-\frac{3}{2}2^n+\frac{4}{3}, \btheta\in [0, 2\pi)^{3n+4L}, \ct = \mathrm{QSD}(n)} \, f_{\ct}(\btheta):=\frac{1}{2}\|\Vct(\btheta) - U\|_{\mathrm{F}}^2,
\end{equation}
which is a non-convex (but continuous) optimization problem in the $\btheta$ variables. In fact, since the structure \cmt{is fixed,} the only free parameters are the rotation angles. But on the other hand, can we do better in terms of number of CNOTs and in terms of incorporating the hardware constraints?

\section{The optimization landscape}
\label{sec:propr}

We start by deriving some useful properties of the cost function $f_{\ct}(\btheta):=\frac{1}{2}\|\Vct(\btheta) - U\|_{\mathrm{F}}^2$ given the target unitary $U$ and a fixed structure. In particular, we will look at $\Vct$ and \cmt{its} properties of differentiability, \cmt{as well as the} first and second order derivatives \cmt{of $f_{\ct}$}, which are important for optimization purposes. 

First, we would like to prove that $\Vct(\btheta)$ is a smooth mapping in the rotation angles $\btheta$. To do so, we apply the partial derivative operator, $\partialk$, to $\Vct$, with respect to an arbitrary angle $\theta_k$, where $k$ is an index in $[1, 3n+4L]$. We further let $R_{g_k}$ be the rotation gate associated to that angle and $\sigma_{g_k}$ be the respective Pauli operator ($g$ can be $x,y,z$ depending on which type of rotation gate the angle refers to). With this notation, $\frac{d}{d\theta_k}R_{g_k}(\theta_k)=-\frac{i}{2}\sigma_{g_k} R_{g_k}(\theta_k)$, and $\frac{d^2}{d\theta_k^2}R_{g_k}(\theta_k)=-\frac{1}{4} R_{g_k}(\theta_k)$. Thus, if a sequence of partial derivative operators are applied to $\Vct$, to compute its gradient, or Hessian, or higher order derivative, the result is a complex constant times $\Vct$ with Pauli gates inserted at specific locations. Thus, we have proved the following theorem.
\begin{theorem}
\label{thm:smooth}
The circuit $\Vct(\btheta)$ is infinitely differentiable with respect to the rotation angles $\btheta$.
\end{theorem}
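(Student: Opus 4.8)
The plan is to express $\Vct(\btheta)$ explicitly as a composition of elementary smooth building blocks and then invoke the fact that finite products and Kronecker products of smooth matrix-valued functions are smooth. Concretely, each rotation gate $R_{g}(\theta_k)$ is, entrywise, a trigonometric polynomial in $\theta_k$ (e.g. $R_x(\theta) = \cos(\theta/2) I - i\sin(\theta/2) X$), hence $C^\infty$ as a map $\R \to \U(2) \subset \C^{2\times 2}$; tensoring with fixed identity matrices and multiplying by the fixed constant CNOT matrices preserves smoothness since those operations are (multi)linear with constant coefficients. Since $\Vct(\btheta)$ is built from finitely many such factors — the $3n$ initial single-qubit rotation triples and the $4L$ rotations inside the $L$ CNOT units, interleaved with constant CNOT matrices — and matrix multiplication is a polynomial (hence smooth) map on entries, $\Vct$ is a finite composition of $C^\infty$ maps and therefore $C^\infty$.

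First I would set up notation: write $\Vct(\btheta) = \prod_{m} M_m(\theta_{i_m})$ where each $M_m$ is either a constant matrix (a CNOT embedded via Kronecker products with identities) or an embedded rotation gate $I \otimes R_{g}(\theta_k) \otimes I$ depending on a single coordinate of $\btheta$. Second, I would record the elementary smoothness facts already supplied in the text just before the statement: $\frac{d}{d\theta_k}R_{g_k}(\theta_k) = -\tfrac{i}{2}\sigma_{g_k} R_{g_k}(\theta_k)$ and $\frac{d^2}{d\theta_k^2}R_{g_k}(\theta_k) = -\tfrac14 R_{g_k}(\theta_k)$, which show by induction that all derivatives of $R_{g_k}$ exist and are again (constant multiples of Pauli-dressed) rotation matrices — in particular continuous. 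Third, I would apply the chain rule / Leibniz rule for products: the partial derivative $\partialk \Vct$ is a sum over occurrences of $\theta_k$ of the product $\Vct$ with the corresponding factor $R_{g_k}$ replaced by $-\tfrac{i}{2}\sigma_{g_k} R_{g_k}$; iterating, any higher mixed partial derivative is a finite sum of products of the same elementary factors (with inserted Pauli matrices and scalar constants), each of which is continuous. Hence every partial derivative of every order exists and is continuous, which is exactly $C^\infty$.

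Honestly, there is no real obstacle here — this is a soft statement whose entire content is "finite products and Kronecker products of analytic matrix functions are analytic (hence smooth)." The only thing requiring minor care is bookkeeping: making precise that each $\theta_k$ appears in exactly one factor of the product (so the Leibniz rule has finitely many terms) and that the embeddings $A \mapsto I_{2^{j-1}} \otimes A \otimes I_{2^{n-j}}$ are linear with constant matrix coefficients and therefore smoothness-preserving. One could make the argument even tighter by noting $\Vct$ is in fact real-analytic (each entry is a polynomial in the $\cos(\theta_k/2), \sin(\theta_k/2)$, which are entire), but $C^\infty$ is all that is claimed and all that is needed for the optimization results that follow, so I would keep the proof at the level of: "each factor is smooth, matrix multiplication and Kronecker product are smooth, a finite composition of smooth maps is smooth, done."
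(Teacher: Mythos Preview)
Your proposal is correct and follows essentially the same approach as the paper: the paper's proof (given in the paragraph immediately preceding the theorem) also uses the identities $\frac{d}{d\theta_k}R_{g_k}(\theta_k)=-\frac{i}{2}\sigma_{g_k} R_{g_k}(\theta_k)$ and $\frac{d^2}{d\theta_k^2}R_{g_k}(\theta_k)=-\frac{1}{4}R_{g_k}(\theta_k)$ to observe that any iterated partial derivative of $\Vct$ is a scalar multiple of $\Vct$ with Pauli matrices inserted at specific locations, and concludes smoothness from that. Your additional framing in terms of finite compositions of smooth maps and real-analyticity is a harmless (and slightly more general) gloss on the same computation.
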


Next, we look at the cost $f_{\ct}(\btheta)$. \rev{Using Eq.~\eqref{eq:retr}, we get}
\begin{align*}
    \partialk \frac{1}{2}\|\Vct(\btheta)-U\|_F^2&= -\Re~\tr\left[\partialk \Vct(\btheta)\dagg U\right]
\end{align*}
and
\begin{align}\label{eq:elements}
    \frac{\partial^2}{\partial\theta_k\partial\theta_{\ell}} \frac{1}{2}\|\Vct(\btheta)-U\|_F^2&= -\Re~\tr\left[\frac{\partial^2}{\partial\theta_k\partial\theta_{\ell}} \Vct(\btheta)\dagg U\right].
\end{align}
In particular, by the fact that $\frac{d^2}{d\theta_k^2}R_{g_k}(\theta_k)=-\frac{1}{4} R_{g_k}(\theta_k)$, for the diagonal elements:
\begin{align}\label{eq:diag}
    \frac{\partial^2}{\partial\theta_k^2} \frac{1}{2}\|\Vct(\btheta)-U\|_F^2&= \frac{1}{4}\Re~\tr[ \Vct(\btheta)\dagg U] = \frac{d}{4}-\frac{1}{8}\|\Vct(\btheta)-U\|_F^2.
\end{align}
Now we are ready to show that $f_{\ct}(\btheta)$ is strongly smooth in the optimization sense (i.e., its gradient is Lipschitz continuous).

\begin{theorem}
\label{thm:hessian}
\cmt{For all $\btheta$, the Hessian of $f_{\ct}(\cdot)=\frac{1}{2}\|\Vct(\cdot)-U\|_F^2$ evaluated at $\btheta$} has spectrum in $[-(3n+4L-3/4)d,(3n+4L-3/4)d]$.
\end{theorem}
\begin{proof} First of all, the Hessian is real and symmetric, so its eigenvalues lie on the real line. Then, note that for all unitary matrices, $W\in\U(d)$: $\|W\|_{\textrm{F}}^2=d$ and the farthest unitary matrix from any $W$ is $-W$. Thus, $\Re~\tr[\cdot\dagg \cdot]:\U(d)^2\to[-d,d]$. Hence, using~\eqref{eq:elements}, we can show that each \cmt{off}-diagonal element of the $(3n+4L)$ by $(3n+4L)$ Hessian matrix is bounded in absolute value by $d$. As for the diagonal elements, Eq.~\eqref{eq:diag} says that they are bounded below by $-d/4$ and above by $d/4$. 

With this in place, we can use Gershgorin's disc theorem~\cite[Theorem 6.1.1.]{horn_matrix_2012} with centers in $[-d/4,d/4]$ and radii in $[\cmt{0},(3n+4L-1)d]$ to prove the claim.
\end{proof}

Theorem \ref{thm:hessian} says that $f_{\ct}(\btheta)$ is a \cmt{strongly} smooth function (since the \cmt{spectrum of the} Hessian is \cmt{uniformly} bounded), which implies fast convergence to a stationary point (i.e., points for which the gradient vanishes) for gradient descent. In addition, even though $f_{\ct}(\btheta)$ is non-convex, gradient descent with random initialization \cite{lee2016gradient}, perturbed gradient descent \cite{jin2017escape}, and perturbed Nesterov's method \cite{jin2018accelerated} all converge to second-order stationary points under \cmt{strong} smoothness. Second-order stationary points are stationary points where the Hessian is positive semi-definite, and therefore are local minima. So, applying these methods to $f_{\ct}(\btheta)$, we will be sure to reach at least a local minimum.

In some cases, such as in non-convex low rank problems, most second-order stationary points are, in fact, global minima \cite{ge17no}. Even though this is not the case for $f_{\ct}$, we do not need to converge to a global minimum necessarily. We would be content finding a $\btheta^*$ such that $V_{\ct}(\btheta^*)$ is a global minimum up to a global phase transformation. 

To this aim, we now present supporting facts that lead us to conjecture that, under certain conditions, we can easily find global minima up to a global phase transformation.

First, we report a technical lemma. 
\begin{lemma}
\label{lem:orthog}
Indicate with $\orthog$ the orthogonal complement with respect to the $\Re\tr[(\cdot)^\dag (\cdot)]$ operation. The orthogonal complement $\un(d)\orthog$ is the set of Hermitian matrices and $\su(d)\orthog=\un(d)\orthog+\spn\{iI\}$.
\end{lemma}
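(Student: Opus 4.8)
The plan is to work over the real inner product space $(M_d(\C), \langle A, B\rangle := \Re\tr[A^\dag B])$, where $M_d(\C)$ has real dimension $2d^2$, and to exhibit the claimed orthogonal complements by a dimension count together with an explicit verification that the candidate subspaces are orthogonal to $\un(d)$ and $\su(d)$ respectively. Recall that $\un(d)$ consists of the anti-Hermitian matrices, which form a real subspace of dimension $d^2$, and $\su(d)$ consists of the traceless anti-Hermitian matrices, a real subspace of dimension $d^2-1$. Write $\mathrm{Herm}(d)$ for the space of Hermitian matrices, also of real dimension $d^2$.

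First I would show $\un(d)\orthog = \mathrm{Herm}(d)$. For the inclusion $\mathrm{Herm}(d) \subseteq \un(d)\orthog$: if $H$ is Hermitian and $K$ is anti-Hermitian, then $\tr[H^\dag K] = \tr[HK]$, and taking the conjugate transpose inside the trace gives $\overline{\tr[HK]} = \tr[K^\dag H^\dag] = \tr[(-K)(H)] = -\tr[KH] = -\tr[HK]$, so $\tr[HK]$ is purely imaginary, hence $\Re\tr[H^\dag K] = 0$. Since $\dim \mathrm{Herm}(d) = d^2 = 2d^2 - \dim\un(d)$, the inclusion is an equality. This uses the standard fact that for any real subspace $W$ of a finite-dimensional real inner product space, $\dim W + \dim W\orthog$ equals the total dimension; one should note here that $\Re\tr[(\cdot)^\dag(\cdot)]$ is a genuine (positive-definite) inner product on $M_d(\C)$ viewed as a real vector space, which is why the complement behaves as expected.

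Next I would handle $\su(d)\orthog$. Since $\su(d) \subseteq \un(d)$, we have $\su(d)\orthog \supseteq \un(d)\orthog = \mathrm{Herm}(d)$, and also $\dim \su(d)\orthog = 2d^2 - (d^2-1) = d^2 + 1 = \dim\mathrm{Herm}(d) + 1$. So it suffices to produce one element of $\su(d)\orthog$ not in $\mathrm{Herm}(d)$; the matrix $iI$ works, since it is anti-Hermitian (hence not Hermitian, as $d\geq 1$) and for any traceless anti-Hermitian $K$ we have $\Re\tr[(iI)^\dag K] = \Re\tr[-iK] = \Re(-i\tr[K]) = \Re(0) = 0$. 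Therefore $\su(d)\orthog = \mathrm{Herm}(d) \oplus \spn\{iI\} = \un(d)\orthog + \spn\{iI\}$, and since $iI \notin \mathrm{Herm}(d)$ the sum is direct, matching the dimension count.

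I do not anticipate a serious obstacle here; the only point requiring minor care is being explicit that $\Re\tr[(\cdot)^\dag(\cdot)]$ is positive-definite on $M_d(\C)$ as a $2d^2$-dimensional real space (so that orthogonal complements have complementary dimension), and keeping the Hermitian/anti-Hermitian conjugation identities straight. Everything else is a short computation plus the two dimension counts.
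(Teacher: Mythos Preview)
Your proof is correct, and it takes a different (cleaner) route than the paper. The paper argues the first claim by an explicit entry-by-entry computation: it writes out $\Re\tr[H^\dag U]=\Re\sum_{i,j}\bar h_{ij}u_{ij}$, uses $u_{ji}=-\bar u_{ij}$ to force $H$ Hermitian when the sum vanishes for all anti-Hermitian $U$, and then checks the converse by splitting the sum into off-diagonal and diagonal parts. You instead prove one inclusion ($\mathrm{Herm}(d)\subseteq\un(d)\orthog$) via the one-line observation that $\tr[HK]$ is purely imaginary, and then close with the dimension count $\dim\mathrm{Herm}(d)=d^2=2d^2-\dim\un(d)$. For the $\su(d)$ part, the paper simply notes that adding $ciI$ to a Hermitian matrix still lies in $\su(d)\orthog$, while your dimension count $\dim\su(d)\orthog=d^2+1$ together with $iI\notin\mathrm{Herm}(d)$ makes the equality airtight. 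Your approach buys brevity and avoids index bookkeeping; the paper's approach is more self-contained in that it does not invoke the complementary-dimension property of orthogonal complements (though that property is entirely standard once one observes, as you do, that $\Re\tr[(\cdot)^\dag(\cdot)]$ is positive-definite).
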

\begin{proof}
Recall that $\un(d)$ consists of anti-Hermitian matrices and $\su(d)$ consists of traceless anti-Hermitian matrices. The orthogonal complement $\un(d)\orthog$ is the set of all the matrices $H$ for which $\Re\tr[H^\dag U]=0$, for all $U$ anti-Hermitian. By direct calculation, indicating with $h_{ij}$ the element $i,j$ of matrix $H$, and with $u_{i,j}$ the one of matrix $U$, as well as $\bar{h}_{i,j}$ the conjugate of $h_{i,j}$, then,
\begin{equation*}
   \Re\tr[H^\dag U] = \Re \left[\sum_{i}\sum_{j} \bar{h}_{i,j} u_{i,j} \right]. 
\end{equation*}
If the above has to be $0$ for all anti-Hermitian matrices $U$, then $H$ has to be Hermitian by $u_{j,i} = -\bar{u}_{i,j}$ (in particular, the diagonal of $U$ is only imaginary). On the other hand, if $H$ is Hermitian (therefore its diagonal is only real), then,
\begin{equation*}
   \Re\tr[H^\dag U] = \underbrace{\Re \Big[\sum_{j>i} \bar{h}_{i,j} u_{i,j} + \sum_{j<i} {h}_{i,j} \bar{u}_{i,j}\Big]}_{=0} + \underbrace{\Re\Big[\sum_{i=j} \bar{h}_{i,j} u_{i,j}  \Big]}_{=0} = 0
\end{equation*}
Hence, the orthogonal complement $\un(d)\orthog$ is the set of Hermitian matrices.

In addition, $\su(d)$ consists of traceless anti-Hermitian matrices, so its orthogonal complement consists not only of Hermitian matrices, but also of any matrix of the form $H+c i I$, where $H$ is Hermitian and $c \in \mathbb{R}$. Thus, the thesis follows.
%
%
\end{proof}

With the Lemma in place, we are ready for another intermediate result that pertains to stationary points. 

\begin{theorem}
\label{thm:stationary}
\cmt{Let $U \in \SU(d)$ be a special unitary matrix, and $V(\cdot):\R^p\to \SU(d)$ be a function that is infinitely differentiable, surjective, and of constant rank. \cmt{For all} $U\in \SU(d)$, a parameter value $\btheta \in [0, 2\pi)^p$ is a stationary point of the cost $f(\cdot)=\frac{1}{2}\|V(\cdot)-U\|_{\mathrm{F}}^2$ if and only if $V(\btheta)\dagg U$ has eigenvalues in $\{e^{i\alpha},-e^{-i\alpha}\}$ for some $\alpha\in [0,2\pi)$.}
\end{theorem}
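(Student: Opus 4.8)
The plan is to compute the gradient of $f$ on the manifold and use the constant-rank/surjectivity hypothesis to convert the stationarity condition into a statement purely about the ambient derivative $V(\btheta) - U$ being "normal" to the image of $DV(\btheta)$. Concretely, by the chain rule, $\theta$ is stationary for $f$ iff for every tangent direction $\dot\theta \in \R^p$ we have $\Re\tr[(V(\btheta)-U)\dagg\, DV(\btheta)[\dot\theta]] = 0$; that is, $V(\btheta) - U$ is $\Re\tr[(\cdot)\dag(\cdot)]$-orthogonal to the image of the linear map $DV(\btheta)$. Since $V$ maps into $\SU(d)$, the image of $DV(\btheta)$ is contained in the tangent space $T_{V(\btheta)}\SU(d) = V(\btheta)\,\su(d)$. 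The role of surjectivity plus constant rank is exactly to force equality here: by the constant rank theorem, a constant-rank surjective map onto the manifold $\SU(d)$ is a submersion, so $\im DV(\btheta) = T_{V(\btheta)}\SU(d) = V(\btheta)\su(d)$ at every $\btheta$. (I'd flag this as the one place where the hypotheses are used in an essential, slightly non-obvious way — a surjective smooth map need not be a submersion in general, but constant rank + surjectivity onto a connected manifold gives it via Sard/rank theorem.)

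With that identification, stationarity becomes: $V(\btheta)-U \;\perp\; V(\btheta)\su(d)$ under $\Re\tr[(\cdot)\dag(\cdot)]$. Left-multiplication by the unitary $V(\btheta)\dagg$ is an isometry for this real inner product, so equivalently $I - V(\btheta)\dagg U \;\perp\; \su(d)$, i.e. $V(\btheta)\dagg U - I \in \su(d)\orthog$. Now invoke Lemma~\ref{lem:orthog}: $\su(d)\orthog$ is the set of matrices of the form $H + ciI$ with $H$ Hermitian and $c\in\R$. Writing $M := V(\btheta)\dagg U \in \U(d)$, the condition is that $M - I$ is Hermitian up to an imaginary multiple of the identity, i.e. $M = H' + ciI$ for some Hermitian $H'$ and $c \in \R$ (absorbing the $-I$). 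Equivalently, $M - ciI$ is Hermitian, so $M$ is a normal matrix whose eigenvalues all lie on the line $\{\,t + ci : t \in \R\,\}$ in the complex plane. But $M$ is unitary, so its eigenvalues also lie on the unit circle. The intersection of that horizontal line with the unit circle is at most two points, symmetric about the imaginary axis: if one eigenvalue is $e^{i\alpha}$, the line is $\{\Im z = \sin\alpha\}$ and the only other unit-modulus point on it is $-e^{-i\alpha} = -\cos\alpha + i\sin\alpha$. Hence every eigenvalue of $M = V(\btheta)\dagg U$ lies in $\{e^{i\alpha}, -e^{-i\alpha}\}$ for this common $\alpha$, which is precisely the claimed condition.

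For the converse, suppose $V(\btheta)\dagg U$ has all eigenvalues in $\{e^{i\alpha}, -e^{-i\alpha}\}$. Since it is unitary it is diagonalizable, and the spectral decomposition shows $V(\btheta)\dagg U - i\sin\alpha\, I$ is Hermitian (its eigenvalues $\cos\alpha$ and $-\cos\alpha$ are real), hence $V(\btheta)\dagg U - I \in \su(d)\orthog$ by Lemma~\ref{lem:orthog}. Reversing the isometry argument, $V(\btheta) - U \perp V(\btheta)\su(d) = \im DV(\btheta)$, so the gradient of $f$ vanishes at $\btheta$. I expect the only real subtlety to be the submersion step; the rest is a short chain of isometry reductions plus the elementary "line meets circle in $\le 2$ points" observation, and Lemma~\ref{lem:orthog} does the algebraic heavy lifting of characterizing $\su(d)\orthog$. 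One should also double-check the trivial edge case $\cos\alpha = 0$ (the line is tangent to the circle, only one eigenvalue $\pm i$), which is harmless since the stated set $\{e^{i\alpha}, -e^{-i\alpha}\}$ just collapses to a single point there.
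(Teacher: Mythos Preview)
Your proposal is correct and follows essentially the same route as the paper: identify $\im DV(\btheta)$ with $T_{V(\btheta)}\SU(d)=V(\btheta)\su(d)$ via the constant-rank/global-rank theorem, reduce by the unitary isometry to the condition $V(\btheta)\dagg U\in\su(d)\orthog$, invoke Lemma~\ref{lem:orthog}, and finish with the line-meets-circle observation. The only cosmetic difference is that you phrase stationarity as $V(\btheta)-U\perp V(\btheta)\su(d)$ (hence $V(\btheta)\dagg U - I\in\su(d)\orthog$) whereas the paper writes $U\perp V(\btheta)\su(d)$ directly; these are equivalent since $V(\btheta)$ is automatically orthogonal to its own tangent space and $I$ is Hermitian, so lies in $\su(d)\orthog$.
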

\begin{proof}
Assume the hypotheses. We have the following line of implications:
\begin{align*}
    \btheta\text{ is a stationary point of }f \hspace{.5cm}&\overset{1}{\iff} \Re~\tr\bigg[\partialk V(\btheta)\dagg U\bigg]=0\cmt{\text{ for all }} k\in[p]\\
    &\overset{2}{\iff} U\in \spn\bigg\{\partialk V(\btheta)\bigg\}\orthog\\
    &\hspace{1.7cm}\overset{3}{=}\left[\mathcal{T}_{V(\btheta)}\SU(d)\right]\orthog\\
    &\hspace{1.7cm}\overset{4}{=}\left[V(\btheta)\su(d)\right]\orthog\\
    &\hspace{1.7cm}\overset{5}{=}V(\btheta)\su(d)\orthog\\
    &\iff V(\btheta)\dagg U\in \su(d)\orthog.
\end{align*}

Step 1 follows from the definition of \cmt{a} stationary point (a point where the gradient vanishes) and our derivation of the gradient for $f(\cdot)=\frac{1}{2}\|V(\cdot)-U\|_{\mathrm{F}}^2$. Step 2 follows from the definition of \cmt{an} orthogonal complement and the linearity of the trace (i.e., $U$ is in the span of the orthogonal complement of the derivative of $V$ \cmt{if and only if} the gradient is zero). Step 3 follows from the global rank theorem \cite[Thm 4.14]{lee2013smooth} under infinite differentiability, surjectivity, and constant rank assumptions, where we use $\mathcal{T}$ to denote the tangent space (which appears since we are taking the derivative of \cmt{$V(\btheta)$}). Concretely, given $W\in \SU(d)$, $\mathcal{T}_W \SU(d)\coloneqq \{\gamma'(0)\mid \gamma:\R\to\SU(d)\text{ smooth with }\gamma(0)=W\}$. For Step 4, a little more care has to be put. First, it can be shown, via the left or right group action, that $\mathcal{T}_W \SU(d)$ is isomorphic to $\mathcal{T}_I \SU(d)=\su(d)$. We want now to show that $\mathcal{T}_W \SU(d)=W\su(d)$. Since they are isomorphic, we only have to show one direction. Let $A\in\su(d)$. We want to show $WA\in \mathcal{T}_W \SU(d)$. Towards this end, define $\gamma(t)=W\exp(tA)$. Note $\gamma(0)=W$, and $\gamma'(0) = WA$. Furthermore, $\gamma(t) \in \SU(d)$ (which is easy to see by direct computations\footnote{
We have $
    \gamma(t)\dagg\gamma(t) = \exp(tA)\dagg W\dagg W\exp(tA) = \exp(tA)\dagg \exp(tA)= \exp(tA\dagg)\exp(tA)= \exp(-tA)\exp(tA)= \exp(-tA+tA) = I
    $
and
$\det(\gamma(t)) = \det(W\exp(tA) =\det(W)\det(\exp(tA))= \det(\exp(tA))
= \exp(t\tr(A))= \exp(0)=1.
$}).
Thus, $WA=\gamma'(0)\in \mathcal{T}_W \SU(d)$ and so $\mathcal{T}_W \SU(d)=W\su(d)$, proving Step 4. Step 5 follows via properties of the Hermitian transpose. Explicitly, for a set $\mathcal{U}$,
\begin{align*}
    [W\mathcal{U}]\orthog &= \{Y\mid \Re~\tr [Y\dagg WA]=0\cmt{\text{ for all }} A\in\mathcal{U}\}\\
    &=\{Y\mid \Re~\tr [(W\dagg Y)\dagg A]=0\cmt{\text{ for all }} A\in\mathcal{U}\}\\
    &= W\{W\dagg Y\mid \Re~\tr [(W\dagg Y)\dagg A]=0\cmt{\text{ for all }} A\in\mathcal{U}\}\\
    &= W\mathcal{U}\orthog.
\end{align*}
Now all that is left is to determine the contents of $\su(d)\orthog\cap \U(d)$. We have from Lemma \ref{lem:orthog} that $\su(d)\orthog$ is the set of matrices that equal $A+ciI$ for some Hermitian $A$ and some $c\in\R$. Furthermore, Hermitian matrices are precisely those that equal $Q\dagg D Q$ for some unitary $Q$ and diagonal $D$ with real entries. So, $\su(d)\orthog$ is the set of matrices that are unitarily diagonalizable with eigenvalues in \rev{$\{\lambda+ci\mid \lambda\in\R\}$} for some $c\in\R$. But, unitary matrices are precisely the matrices that are unitarily diagonalizable with eigenvalues in $\U(1)$. Thus, $\su(d)\orthog\cap \U(d)$ is the set of matrices that are unitarily diagonalizable with eigenvalues in $\{e^{i\alpha},-e^{-i\alpha}\}$ for some $\alpha$, since \rev{$\{\lambda+ci\mid \lambda\in\R\}\cap \U(1)=\{e^{i\alpha},-e^{-i\alpha}\}$} for $\alpha=\arcsin(c)$.
\end{proof}

Theorem~\ref{thm:stationary} describes the space of stationary points for infinitely differentiable, surjective, constant rank mappings. Note that the image of points that do not have constant rank has measure zero in $\im(V)=\SU(d)$ by Sard's theorem \cite[Ch. 6]{lee2013smooth}, so this assumption is not restrictive. Moreover, we have from Theorem~\ref{thm:smooth} that $\Vct$ is infinitely differentiable, so we already have a relevant mapping that is infinitely differentiable. Finally, the surjectivity assumption appears as though it could be relaxed. But, unfortunately, this is not the case.



\begin{remark}
\label{rmk:surj}
One might wonder if the surjectivity assumption can be relaxed to the assumption $U\in\im(V)$. The answer is no. \cmt{We found counter-examples when we were running the numerical experiments. In particular, for certain structures $\ct$ with length smaller than the TLB, hence lacking surjectivity, and with the Toffoli gate as the target unitary $U$, we computed some stationary points of $f_{\ct}$ that exactly compiled $U$ and some that failed the eigenvalue condition of Theorem~\ref{thm:stationary}. In other words, for these examples, $U\in\im(\Vct)$ but the conclusion of Theorem 3 does not hold. These results are shown in Table~\ref{tab:my_label}: the stationary points in the unsuccessful compilations did not satisfy the eigenvalue condition of Theorem~\ref{thm:stationary}.}
\end{remark}

Since $\Vct$ is infinitely differentiable, if only it is surjective as well then its stationary points are precisely the points such that $\Vct(\btheta)\dagg U$ has eigenvalues in $\{e^{i\alpha},-e^{-i\alpha}\}$ for some $\alpha\in [0,2\pi)$. Having more than one eigenvalue means that the stationary points are not in general global minima up to a global phase transformation though. But, we conjecture that the form of stationary points in Theorem \ref{thm:stationary} simplifies even further for second-order stationary points of $\Vct$.

\begin{conjecture}
\label{conj:stat}
\cmt{If the parametric circuit $\Vct(\cdot)$ is surjective and the Hessian of the cost function $f_{\ct}(\cdot)=\frac{1}{2}\|\Vct(\cdot)-U\|_{\mathrm{F}}^2$ is positive semi-definite at a stationary point $\btheta$, then $\Vct(\btheta)\dagg U=e^{i\alpha}I$ for some $\alpha\in[0,2\pi)$, meaning that any stationary point $\btheta$ is a global minimum of the cost function $f_{\ct}(\cdot)$, and therefore a solution to the quantum compiling problem, up to a global phase.}
\end{conjecture}

\cmt{We tested Conjecture~\ref{conj:stat} extensively. In order to falsify the conjecture, we would need an example of $\Vct$, $\btheta$, and $U$ such that $\Vct$ is surjective, the Hessian is positive semi-definite, and $\Vct(\btheta)\dagg U$ is not a scalar matrix. We searched for counter-examples in two experiments. In both, we used structures with length smaller than the TLB as the non-surjective mappings, and the QSD decomposition structure as the surjective mapping. In the first experiment, we computed $\btheta$ via gradient descent. Hence, there was no need to explicitly compute the Hessian. In the second experiment, we randomly generated $W\in \su(d)\orthog\cap \SU(d)$ and $\btheta\in \R^p$, then set $U=\Vct(\btheta)W$. From the proof of Theorem~\ref{thm:stationary}, we have that $\su(d)\orthog\cap \SU(d)$ is the set of matrices that are unitarily diagonalizable with eigenvalues in $\{e^{-\alpha},-e^{-i\alpha}\}$ for some $\alpha$ such that the multiplicity of $e^{i\alpha}$ times $\alpha$ plus the multiplicity of $-e^{-i\alpha}$ times $\pi-\alpha$ is an integer multiple of $2\pi$. So, in order to generate $W$, we sampled $Q\in U(d)$ randomly and chose \rev{a non-scalar} eigenvalue matrix $\Lambda$, then set $W=Q\dagg \Lambda Q$. \rev{Thus, $\Vct(\btheta)\dagg U$ was a non-scalar matrix by construction.} We did not find any counter-examples in either experiment. Note that we did have to explicitly compute the Hessian and its eigendecomposition for the second experiment, but this is the only place in the paper where we actually compute the Hessian.}

Conjecture \ref{conj:stat} says that, despite the difficulty in finding global minima due to non-convexity, the stationary points we find are global minima up to a global phase. However, this is only in the case of surjectivity, and so the non-convexity causes greater difficulty for lengths shorter than the minimum required for surjectivity. We discuss this issue further in the next section. In particular, the divide between easy compilation problems and harder ones is the surjectivity of $\Vct(\btheta)$. 

\section{Special structures}
\label{sec:struc}

While the results and discussion in the previous section were for arbitrary CNOT unit structures, we would like to focus on particular structures henceforth. The properties that we desire such structures to have are: 
\begin{enumerate}
    \item they have to be able to capture all circuits of a given qubit size (i.e., surjectivity);
    \item their length has to be between one and two times the lower bound for surjectivity;
    \item they have to be compressible, \cmt{that is, they have to include a method for finding new structures with shorter length;}
    \item their maximum approximation error has to depend favorably on the compressed length;
    \item they \cmt{should} exactly compile special quantum gates (e.g., Toffoli gates) with length close to optimal;
    \item incorporating hardware constraints should only increase the length of the compressed structure by a constant multiplicative factor in order to preserve the same approximation error.
\end{enumerate}

Many of these properties are intertwined. For example, property (1) is necessary for property (4) and also for convergence (to a stationary point that only requires a global phase transformation), as discussed in the previous section. On the other hand, since the compressed structure will not be surjective when its length is less than the surjectivity bound, convergence becomes more difficult. Assuming property (2), then properties (1) and (5) are the same for all unitary matrices except a measure zero set. However, there are important matrices in the measure zero set, such as qubit permutations, controlled operations, and cyclic shifts \cite{beth2001quantum}, so we keep these requirements separate. But, as mentioned in Remark \ref{rmk:surj}, if we want to exactly compile these with minimal length, the mapping may be far from surjective and so convergence will be much more difficult. We will see this when compiling for Toffoli gates.

\cmt{We will consider three structures in this paper: $\cart$, standing for Cartan; $\sequ$, standing for sequential; and $\spin$, standing for spin.}

\begin{definition} Let $\cart\in J(\frac{23}{48}4^n-\frac{3}{2}2^n+\frac{4}{3})$ correspond to the QSD decomposition.
\end{definition}

\smallskip

As an example, the structure of CNOT units for $\cart(3)$ is
\smallskip

\begin{tikzpicture}
\node[scale=.65] {
\begin{quantikz}
&\ctrl{1} \cng &\ctrl{1}\cng&\ctrl{1}\cng&\qw&\ctrl{2}\cnq&\qw&\ctrl{1}\cng&\ctrl{1}\cng&\ctrl{1}\cng&\qw&\ctrl{2}\cnq&\qw&\ctrl{2}\cnq&\ctrl{1}\cng&\ctrl{1}\cng&\ctrl{1}\cng&\qw&\ctrl{2}\cnq&\qw&\ctrl{1}\cng&\ctrl{1}\cng&\ctrl{1}\cng&\qw\\
&\targ{}&\targ{}&\targ{}&\ctrl{1}\cng&\qw&\ctrl{1}\cng&\targ{}&\targ{}&\targ{}&\ctrl{1}\cng&\qw&\ctrl{1}\cng&\qw&\targ{}&\targ{}&\targ{}&\ctrl{1}\cng&\qw&\ctrl{1}\cng&\targ{}&\targ{}&\targ{}&\qw\\
&\qw&\qw&\qw&\targ{}&\targ{}&\targ{}&\qw&\qw&\qw&\targ{}&\targ{}&\targ{}&\targ{}&\qw&\qw&\qw&\targ{}&\targ{}&\targ{}&\qw&\qw&\qw&\qw&\rstick{.}
\end{quantikz}};
\end{tikzpicture}
\smallskip

where the CNOT boxes represent our CNOT unit~\cite{drury2008constructive}. $\cart$ satisfies properties (1) and (2) by the constructive proof of \cite{shende2006}. On the other hand, as is, it does not satisfy property (3) on compressibility. \cmt{So}, we will propose a method in Section \ref{sec:prune} that can be used to compress it to as short as the TLB with practically no error, thereby also supporting property (4). $\cart$ only increases by a multiplicative factor of 9 for the uncompressed structure, as shown in \cite{shende2006}. However, it is an open problem how to compile special gates close to their optimal length, and compress it in a way that does not increase connectivity, so the answer to properties (5) and (6) is unknown.

While $\cart$ has a recursive structure, the next two layouts we consider have repeating structures.

\begin{definition} Let $\sequ(L)\in J(L)$ correspond to $L$ CNOT units in order. There are $\frac{n(n-1)}{2}$ CNOT units and the order of $CNOT_{jk}$ is $n(j-1)+\left(k-\frac{j(j+1)}{2}\right)$. Once the end of the order is reached, it repeats.
\end{definition}

\smallskip

\begin{definition} Let $\spin(L)\in J(L)$ correspond to the structure that alternates between $(1\to 2,3\to 4,\ldots)$ and $(2\to 3,4\to 5,\ldots)$.
\end{definition}

\smallskip

As an example, the structure of CNOTs for $\sequ(12)$ when $n=3$ is
\smallskip

\begin{tikzpicture}
\node[scale=.65] {
\begin{quantikz}
&\ctrl{1}\cng&\ctrl{2}\cnq&\qw&\ctrl{1}\cng&\ctrl{2}\cnq&\qw&\ctrl{1}\cng&\ctrl{2}\cnq&\qw&\ctrl{1}\cng&\ctrl{2}\cnq&\qw&\qw\\
&\targ{}&\qw&\ctrl{1}\cng&\targ{}&\qw&\ctrl{1}\cng&\targ{}&\qw&\ctrl{1}\cng&\targ{}&\qw&\ctrl{1}\cng&\qw\\
&\qw&\targ{}&\targ{}&\qw&\targ{}&\targ{}&\qw&\targ{}&\targ{}&\qw&\targ{}&\targ{}&\qw
\end{quantikz}};
\end{tikzpicture}

\smallskip

and for $\spin(12)$ when $n=3$ is
\smallskip

\begin{tikzpicture}
\node[scale=.65] {
\begin{quantikz}
&\ctrl{1}\cng&\qw&\ctrl{1}\cng&\qw&\ctrl{1}\cng&\qw&\ctrl{1}\cng&\qw&\ctrl{1}\cng&\qw&\ctrl{1}\cng&\qw&\qw\\
&\targ{}&\ctrl{1}\cng&\targ{}&\ctrl{1}\cng&\targ{}&\ctrl{1}\cng&\targ{}&\ctrl{1}\cng&\targ{}&\ctrl{1}\cng&\targ{}&\ctrl{1}\cng&\qw\\
&\qw&\targ{}&\qw&\targ{}&\qw&\targ{}&\qw&\targ{}&\qw&\targ{}&\qw&\targ{}&\qw&\rstick{.}
\end{quantikz}};
\end{tikzpicture}

\smallskip

Both $\sequ(L)$ and $\spin(L)$ have the length of the circuit as an input, and so satisfy property (3). But for what $L$, if any, do they satisfy property (1) on surjectivity? We run experiments in Section \ref{sec:gd} to evaluate $\sequ(L)$ and $\spin(L)$ with respect to properties (1), (2), and (4). The results support the following conjecture about (1) and (2).

\begin{conjecture}
\label{conj:surj}
$V_{\sequ(L)}$ and $V_{\spin(L)}$ are surjective for $L\geq  \frac{1}{4}(4^n-3n-1)$, that is, the TLB.
\end{conjecture}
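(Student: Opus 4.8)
The statement is posed as a conjecture because no complete proof is known; what follows is the line of attack I would take and where I expect it to stall.

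\textbf{Reduction to a rank statement.} Every rotation angle enters $2\pi$-periodically, so $V_{\sequ(L)}$ (and likewise $V_{\spin(L)}$) descends to a smooth map $\bar V$ from the compact torus $(\R/2\pi\Z)^{3n+4L}$ to $\SU(d)$, $d=2^n$; hence its image is compact, in particular closed. Since $\SU(d)$ is connected, it suffices to prove the image is also open, and for this the plan is to show that $\bar V$ is a submersion at \emph{some} point $\btheta$, i.e. that $\bar V(\btheta)^{-1}\,d\bar V_\btheta$ spans $\su(d)$, and then to propagate openness to all of $\SU(d)$ using the group structure. Observe that $3n+4L\ge 4^n-1$ exactly when $L\ge\frac14(4^n-3n-1)$, so the TLB is precisely the dimension-counting threshold; the content of the conjecture is that at this threshold the \emph{structured} layouts $\sequ$ and $\spin$ waste no tangent directions (whereas a bad structure, e.g. one touching a disconnected part of the hardware graph, would).

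\textbf{From a regular point to full surjectivity.} Granting one regular point, $\im(V_{\sequ(L)})$ has nonempty interior. To upgrade this to everything I would exploit periodicity: writing one full period of $\sequ$ (all $\binom n2$ CNOT units in order) as a map $B$, the composition of $m$ independent copies equals $V_{\sequ(mP)}$ up to the leading $R_zR_yR_z$ layer. Multiplying two images $V_{\sequ(L)}(\btheta)V_{\sequ(L)}(\btheta')$ produces an interior single-qubit layer at the junction; using the elementary fact (Section~\ref{sec:preliminaries}) that any $u\in\SU(2)$ factors as $R_zR_yR_x$ together with the commutation of $R_z$ through CNOT controls and $R_x$ through CNOT targets, one absorbs this layer into the neighbouring CNOT units, which would give $\im(V_{\sequ(L)})\cdot\im(V_{\sequ(L)})\subseteq\im(V_{\sequ(L')})$ for a comparable $L'$. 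Combined with the classical fact that a closed subsemigroup of a compact connected Lie group with nonempty interior is the whole group, and with monotonicity ($\im(V_{\sequ(L+1)})\supseteq\mathrm{CNOT}_{\ct(L+1)}\cdot\im(V_{\sequ(L)})$ since $\CU(0,0,0,0)=\mathrm{CNOT}$), this would finish the argument; the analogous bookkeeping for the two-pattern alternation handles $\spin$. This part is fiddly circuit algebra but I expect it to go through.

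\textbf{The main obstacle.} The real difficulty is producing the regular point at the tight length. Because $\bar V$ is a trigonometric-polynomial map in the variables $e^{\pm i\theta_k/2}$, the locus where the Jacobian drops rank is a proper real-algebraic subvariety \emph{as soon as it is proper at one point}, so it suffices to certify a single good $\btheta$ — but for general $n$ this seems to need the recursive structure rather than a direct computation. The natural route is induction on $n$: realize an $n$-qubit $\sequ$/$\spin$ structure as an $(n-1)$-qubit sub-structure on qubits $\{1,\dots,n-1\}$ (surjective at its TLB by the inductive hypothesis, hence with a regular point) followed by the CNOT units touching qubit $n$, and show that the excess $\mathrm{TLB}(n)-\mathrm{TLB}(n-1)\approx\frac34\cdot4^{n-1}$ CNOT units supply exactly the missing directions. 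The bookkeeping is consistent — dividing the codimension $4^n-4^{n-1}$ of $\su(2^{n-1})\otimes I$ inside $\su(2^n)$ by the $4$ parameters per unit matches that excess — but the transversality step, namely that the conjugates $\mathrm{Ad}_{g_i^{-1}}(\sigma_{g_i})$ of the single-qubit generators along the achievable partial products $g_i$ span a complement to $\su(2^{n-1})\otimes I$ with no rank loss, is the hard, still-open computation. The QSD construction of \cite{shende2006} shows the analogous spanning is achievable with twice as many CNOT units; closing the factor of two rigorously for $\sequ$ and $\spin$ is exactly what the conjecture asks, and the numerics of Section~\ref{sec:gd} are the present evidence that it holds.
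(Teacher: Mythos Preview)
Your identification of this as an open conjecture is correct and matches the paper exactly: the paper offers no proof, only the numerical experiments of Section~\ref{subsec:random} (gradient descent reaching zero error for random targets at $L\approx\mathrm{TLB}$ on $n=3$ and $n=5$, Figures~\ref{fig.fullconn}--\ref{fig.sparseconn}) as supporting evidence, and it explicitly lists the conjecture among the open problems in Section~\ref{sec:concl}. Your sketch therefore goes well beyond what the paper attempts.

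One caution on the part you label as routine: the semigroup/periodicity argument does not land where you need it. A single regular point at $L=\mathrm{TLB}$ gives the image nonempty interior; products of images, after the junction-absorption you describe, give containment in $\im(V_{\sequ(L')})$ for some $L'\approx 2L$; the closed-subsemigroup-with-interior fact then yields surjectivity of $\bigcup_m\im(V_{\sequ(m)})$, hence of $V_{\sequ(L_0)}$ for \emph{some large} $L_0$. But monotonicity, as you correctly state it ($\im$ at $L+1$ contains a CNOT-translate of $\im$ at $L$), only propagates surjectivity \emph{upward} in $L$; it cannot pull surjectivity back down to the TLB. And closedness plus nonempty interior does not by itself force $\im(V_{\sequ(\mathrm{TLB})})=\SU(d)$: a smooth map between compact manifolds can have closed image with nonempty interior yet still miss an open set, the boundary of the image being covered by critical values. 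So both halves of your outline are genuinely open, not only the transversality step you flag as the main obstacle.
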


In Section \ref{sec:gd}, we also apply $\sequ(L)$ and $\spin(L)$ to specific quantum gate\cmt{s} (e.g., Toffoli), thereby supporting property (5), despite non-surjectivity. 

Structure $\sequ(L)$ can incorporate hardware constraints by simply skipping the CNOTs for unconnected qubits, but it is not obvious how much $L$ would have to increase to preserve a given maximum approximation error.

Fortunately, the results in the next section suggest $L$ would not have to increase at all. In the next section, we test $\sequ(L)$ with hardware constraints corresponding to both a star topology and a line topology. All three structures perform comparably to $\sequ(L)$ without any hardware constraints. Furthermore, $\spin(L)$, another structure that satisfies the line topology hardware constraints, also performs comparably. The experiments suggest that different repeating structures, as long as they include all of the qubits, may fill the space equally fast, on average.

In Table~\ref{tab:my_label_0}, we summarize our main findings for the particular structures we have discussed.

\begin{remark}
\label{rem:depth}
Quantum computers can implement \rev{arbitrary} gates on disjoint qubits simultaneously. \rev{Hence, they can implement CNOT units on disjoint qubits simultaneously.} \cmt{Thus, while our theory is in terms of length--the number of CNOTs in a structure, two other relevant metrics are CNOT depth---the minimum number of layers of CNOT units in a structure---and circuit depth---the minimum number of layers in a structure. Note that if all rotation angles are non-zero, then circuit depth is three plus three times CNOT depth. Also, CNOT depth is always smaller than or equal to the length.} In particular, $\spin(L)$ can be implemented with \cmt{CNOT depth} $\lceil 2L/(n-1)\rceil$ for $n>3$. As another example, $\spin(12)$ when $n=4$ is
\smallskip

\begin{tikzpicture}
\node[scale=.65] {
\begin{quantikz}
&\ctrl{1}\cng&\qw&\qw&\ctrl{1}\cng&\qw&\qw&\ctrl{1}\cng&\qw&\qw&\ctrl{1}\cng&\qw&\qw&\qw\\
&\targ{}&\qw&\ctrl{1}\cng&\targ{}&\qw&\ctrl{1}\cng&\targ{}&\qw&\ctrl{1}\cng&\targ{}&\qw&\ctrl{1}\cng&\qw\\
&\qw&\ctrl{1}\cng&\targ{}&\qw&\ctrl{1}\cng&\targ{}&\qw&\ctrl{1}\cng&\targ{}&\qw&\ctrl{1}\cng&\targ{}&\qw\\
&\qw&\targ{}&\qw&\qw&\targ{}&\qw&\qw&\targ{}&\qw&\qw&\targ{}&\qw&\qw
\end{quantikz}};
\end{tikzpicture}

\smallskip
\cmt{which becomes, if we implement CNOT units on disjoint pairs of qubits simultaneously,}
\smallskip

\begin{tikzpicture}
\node[scale=.65] {
\begin{quantikz}
&\ctrl{1}\cng&\qw&\ctrl{1}\cng&\qw&\ctrl{1}\cng&\qw&\ctrl{1}\cng&\qw&\qw\\
&\targ{}&\ctrl{1}\cng&\targ{}&\ctrl{1}\cng&\targ{}&\ctrl{1}\cng&\targ{}&\ctrl{1}\cng&\qw\\
&\ctrl{1}\cng&\targ{}&\ctrl{1}\cng&\targ{}&\ctrl{1}\cng&\targ{}&\ctrl{1}\cng&\targ{}&\qw\\
&\targ{}&\qw&\targ{}&\qw&\targ{}&\qw&\targ{}&\qw&\qw&\rstick{.}
\end{quantikz}};
\end{tikzpicture}

\smallskip

\cmt{On the other hand, imposing the line topology hardware constraints on $\sequ(12)$ results in}
\smallskip

\begin{tikzpicture}
\node[scale=.65] {
\begin{quantikz}
&\ctrl{1}\cng&\qw&\qw&\ctrl{1}\cng&\qw&\qw&\ctrl{1}\cng&\qw&\qw&\ctrl{1}\cng&\qw&\qw&\qw\\
&\targ{}&\ctrl{1}\cng&\qw&\targ{}&\ctrl{1}\cng&\qw&\targ{}&\ctrl{1}\cng&\qw&\targ{}&\ctrl{1}\cng&\qw&\qw\\
&\qw&\targ{}&\ctrl{1}\cng&\qw&\targ{}&\ctrl{1}\cng&\qw&\targ{}&\ctrl{1}\cng&\qw&\targ{}&\ctrl{1}\cng&\qw\\
&\qw&\qw&\targ{}&\qw&\qw&\targ{}&\qw&\qw&\targ{}&\qw&\qw&\targ{}&\qw
\end{quantikz}};
\end{tikzpicture}

\smallskip

\cmt{which becomes, if we implement CNOT units on disjoint pairs of qubits simultaneously,}
\smallskip

\begin{tikzpicture}
\node[scale=.65] {
\begin{quantikz}
&\ctrl{1}\cng&\qw&\ctrl{1}\cng&\qw&\ctrl{1}\cng&\qw&\ctrl{1}\cng&\qw&\qw&\qw\\
&\targ{}&\ctrl{1}\cng&\targ{}&\ctrl{1}\cng&\targ{}&\ctrl{1}\cng&\targ{}&\ctrl{1}\cng&\qw&\qw\\
&\qw&\targ{}&\ctrl{1}\cng&\targ{}&\ctrl{1}\cng&\targ{}&\ctrl{1}\cng&\targ{}&\ctrl{1}\cng&\qw\\
&\qw&\qw&\targ{}&\qw&\targ{}&\qw&\targ{}&\qw&\targ{}&\qw&\rstick{.}
\end{quantikz}};
\end{tikzpicture}

\smallskip

\cmt{Structurally, $\spin(12)$ and $\sequ(12)$ are almost identical on the line topology, except that $\spin(12)$ moves the last CNOT unit of $\spin(12)$ to the first layer and so decreases the CNOT depth from 9 to 8.}
\end{remark}

\begin{table}
    \centering
        \caption{Considered structures and their properties. Numbered references are to subsections.}
    \begin{tabular}{ccccccc}
    \toprule 
    Structure & \multicolumn{6}{c}{Properties} \\
    & (1) & (2) & (3) & (4) & (5) & (6) \\
    \midrule 
    $\cart$ & \cite{drury2008constructive} & \cite{drury2008constructive} & \ref{subsec:prune}&\ref{subsec:prune} & ? & ?  \\
    $\sequ(L)$ & \multicolumn{2}{c}{Conjecture~\ref{conj:surj}} & \checkmark & \ref{subsec:random} & \ref{subsec:toffoli} & \ref{subsec:random}  \\
    $\spin(L)$ & \multicolumn{2}{c}{Conjecture~\ref{conj:surj}} & \checkmark & \ref{subsec:random} & \ref{subsec:toffoli} & \checkmark  \\
    \bottomrule
    \end{tabular}
    \label{tab:my_label_0}
\end{table}

\section{Gradient descent}
\label{sec:gd}

In Section~\ref{sec:math-opt}, we formulated the mathematical optimization problem, and in Section~\ref{sec:propr} we discussed some of its properties once specified to CNOT unit structures. In the previous section, we discussed some special layouts. We are now ready to look at the approximate compiling problem for a fixed structure. Specifically, we \cmt{look} at the problem
\begin{equation}\label{eq:aqcp-fixstr}
\textbf{(AQCP-$\btheta$)}\qquad \min_{\btheta\in [0, 2\pi)^{p}, \ct = \overline{\ct}} \, f_{\ct}(\btheta):=\frac{1}{2}\|\Vct(\btheta) - U\|_{\mathrm{F}}^2,
\end{equation}
where $\ct$ is fixed to one of the mentioned structures ($\cart, \sequ(L), \spin(L)$), which in turn specifies how many rotation angles there are (i.e., $p$), and we will use a first-order method to find second-order stationary points. The specific method can be tuned in practice, so we present gradient descent and then discuss the modifications that can be made for its variants.

First, we randomly initialize $\btheta[0]$ from the uniform distribution on $[0, 2\pi)^{p}$. Then, we compute
\begin{equation}\label{eq:gradient}
\btheta[t+1] = \btheta[t] - \alpha \nabla f_{\ct}(\btheta[t])
\end{equation}
until the stopping criteria, $\|\nabla f_{\ct}(\btheta[t])\|\leq \epsilon$, is met. The step-size, $\alpha$, is tuned in practice, and the final $\btheta$ is then wrapped in the $[0, 2\pi)^{p}$ set. Component-wise, Eq.~\eqref{eq:gradient} reads,
\begin{equation}\label{eq:gradient-c}
\theta_k[t+1] = \theta_k[t] + \alpha \Re~\tr\bigg[\partialk \Vct(\btheta)\dagg U\bigg].
\end{equation}

One variant of gradient descent is Nesterov's method~\cite{nesterov1983}, which applies the gradient descent step to an auxiliary sequence. The auxiliary sequence is essentially the main sequence but with so-called ``momentum.'' For smooth convex optimization, Nesterov's method is optimal among first-order methods.

Another modification that can be made to gradient descent is the injection of noise. On the one hand, diminishing Gaussian noise can be added to the gradient, in which case the method is called gradient descent with Langevin dynamics. On the other hand, \cmt{noise uniformly sampled from a fixed radius ball} can be added in a random direction whenever a ``stuck'' criteria is met, in which case the method is called perturbed gradient descent.

We found that it was not necessary to add noise and that Nesterov's method was much faster than gradient descent, so we used it as our method of choice. \cmt{As mentioned previously, random initialization helps escape saddle points to find second-order stationary points. However, the random prior may also affect what kind of second-order stationary points are found. In the non-surjective setting, we would like to avoid spurious local minima as well as saddle points, so it is a future research direction to consider different initializations.}

\subsection{A note on computational complexity}

Note that the gradient computation makes up the majority of the computation and memory complexity. Specifically, we have to compute $\partialk \Vct(\btheta)$ for each $k\in\{1, \dots,p\}$, each of which takes the same number of flops to compute as $\Vct(\btheta)$. To compute the matrix for a single layer, via the kronecker product, takes $O(d^2)$ flops. To multiply two matrices takes $O(d^3)$ flops. There are $L$ layers so the total computational complexity for $\Vct(\btheta)$ is $O(Ld^3)$. Thus, computing $\partialk \Vct(\btheta)$ from scratch for each $k\in\{1, \dots,p\}$ comes out to $O(L^2d^3 )$. The memory complexity is $O(L+d^2)$. Since this is the dominant part of the algorithm, the total computational complexity is $O(L^2d^3 T)$ (where $T$ is the number of iterations) and the memory complexity is $O(L+d^2)$.

Fortunately, it is possible to trade-off between the computational and memory complexity. \cmt{We do this in a similar way to backpropogation, the algorithm for computing the gradient of the loss of a feedforward neural network applied to a sample point \cite{goodfellow2016deep}, which is an example of reverse mode automatic differentiation \cite{linnainmaa1976taylor}.} Instead of computing $\partialk \Vct(\btheta)$ from scratch for each $k\in\{1, \dots,p\}$, we can store each of the intermediate matrix computations. Specifically, we can compute the matrix for each layer, taking $O(Ld^2)$ flops and storage. Then we can compute the matrix multiplications from both the right and left, storing the intermediate outputs. That takes $O(Ld^3)$ flops and $O(Ld^2)$ storage. Then, we compute four new versions of each layer corresponding to the partial derivative of each of the four parameters in a layer. This takes $O(Ld^2)$ flops and storage again. Finally, for the partial derivative of each parameter, we make only two matrix computations, taking $O(Ld^3)$ flops and $O(d^2)$ storage (since we don't store all $p$ matrices at once, we multiply them by $U$ and compute the real part of the trace to get a real number). Thus, this way of computing the gradient descent iterates amounts to $O(Ld^3T)$ flops and $O(Ld^2)$ storage. The reduction in computational complexity from $L^2$ to $L$ is significant since $L$ \cmt{can be} exponential in the number of qubits. \cmt{In particular, when the goal is exactly compiling general unitary matrices, we need $L=\Omega(4^n)=\Omega(d^2)$. On the other hand, if the goal is exactly compiling specific unitary matrices or approximately compiling general unitary matrices, $L$ may be smaller. For example, to exactly compile the Toffoli gate, we only need $L=\Omega(n)=\Omega(\log_2(d))$.}

These analyses of the computation and memory complexity of gradient descent are for its implementation on a classical computer. However, it can be implemented on a quantum computer as well. Algorithm 3 of \cite{khatri2019} explains how to implement it using the power-of-two-qubits circuit. While this is a possibility, we do not follow it here. 

\subsection{Numerical tests: random unitary matrices, towards [G1]}
\label{subsec:random}

We start by testing the structures and gradient descent on random unitary matrices, which is our goal [G1]. We consider three different hardware connectivity graphs with edge sets
\begin{align*}
    E_1 &= \{(j,k)\mid j,k\in \{1, \ldots, n\},j\neq k\} \qquad & \textrm{(Full connectivity)}\\
    E_2 &= \{(1,j+1)\mid j\in\{1, \ldots, n-1\}\}\qquad & \textrm{(Star connectivity)}\\
    E_3 &= \{(j,j+1)\mid j\in\{1, \ldots, n-1\}\}\qquad & \textrm{(Line connectivity)}.
\end{align*}
$E_1$ corresponds to no hardware constraints, $E_2$ corresponds to the star topology, and $E_3$ corresponds to the line topology. Corresponding to these, we consider the structures $\sequ_{E_1}$, $\sequ_{E_2}$, $\sequ_{E_3}$, and $\spin$. Note that $\spin$ corresponds to $E_3$, but $\sequ_{E_3}\neq\spin$ except when $n\leq 3$.

\paragraph{Full connectivity results. }
We ran the experiments for $n=3$ and $n=5$ qubits. In both cases, we sampled $100$ random unitary matrices for each structure for different values of $L$. We ran gradient descent for each sample and then computed the approximation error and its probabilities. These curves are given in Figure~\ref{fig.fullconn} and Figure \ref{fig.fullconn2}. As one can see, both $\sequ$ and $\spin$ are almost identical on how the error depends on the length $L$, thereby suggesting that the actual structure does not \cmt{make} a big difference. In addition, both structures reach zero error around their theoretical lower bound (which can be computed as $L=14$ for $n=3$, and $L=252$ for $n = 5$), supporting our Conjecture~\ref{conj:surj}. \cmt{Figure \ref{fig.fullconn2} gives probabilistic statements: the $x$-axis represents a desired maximum approximation error (bound), while the $y$-axis reports the probability that starting from a random guess for a random unitary, we obtain an error larger than the desired bound. Note that this corresponds to one minus the CDF of the approximation error. Each $L$ considered has a curve in Figure~\ref{fig.fullconn2}, representing the distribution of the approximation error, while in Figure~\ref{fig.fullconn} it is compressed to a single point by taking either the mean or the maximum.}

\begin{figure}
\centering
\includegraphics[width=0.475\textwidth]{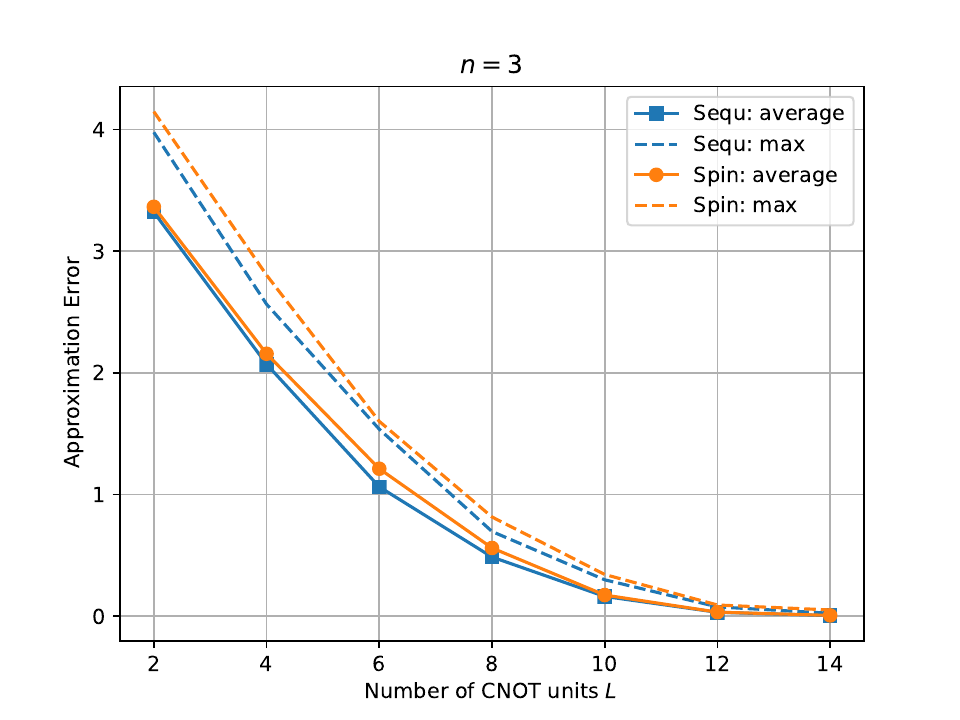}\includegraphics[width=0.475\textwidth]{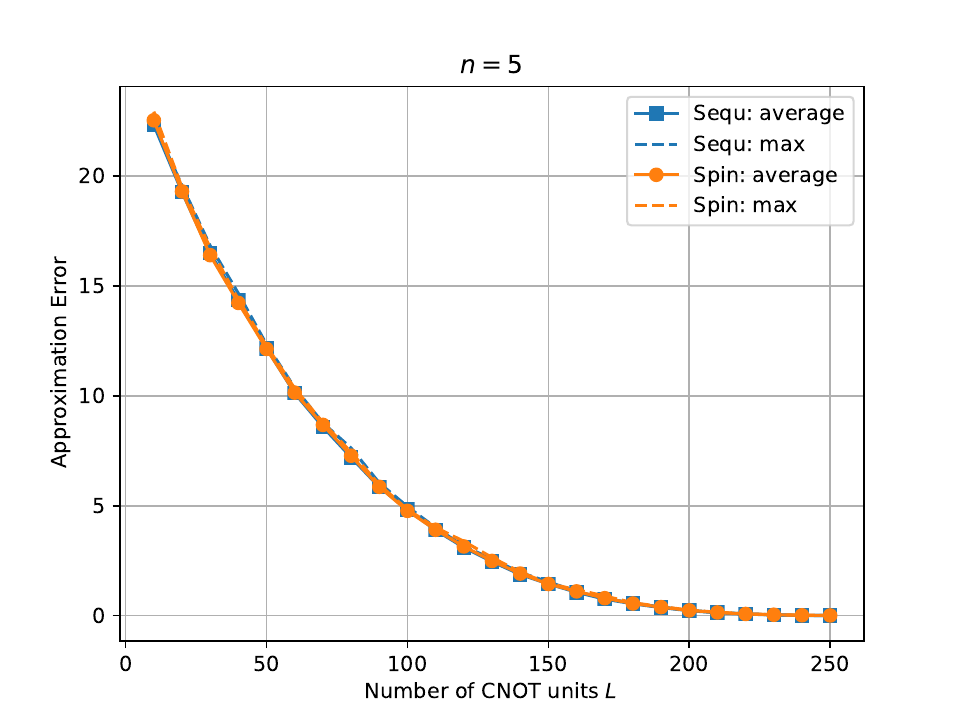}
\caption{Approximation error, \rev{$f_{\ct}$}, for $n=3$ (left) and $n=5$ (right).}
\label{fig.fullconn}
\end{figure}

\begin{figure}
\centering
\includegraphics[width=0.475\textwidth]{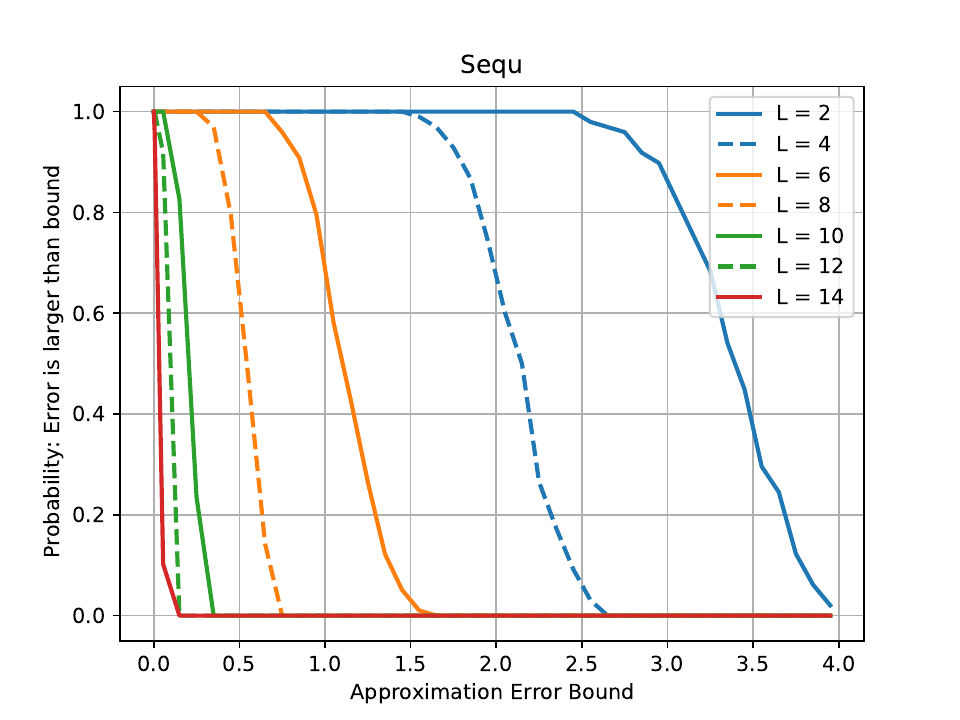}\includegraphics[width=0.475\textwidth]{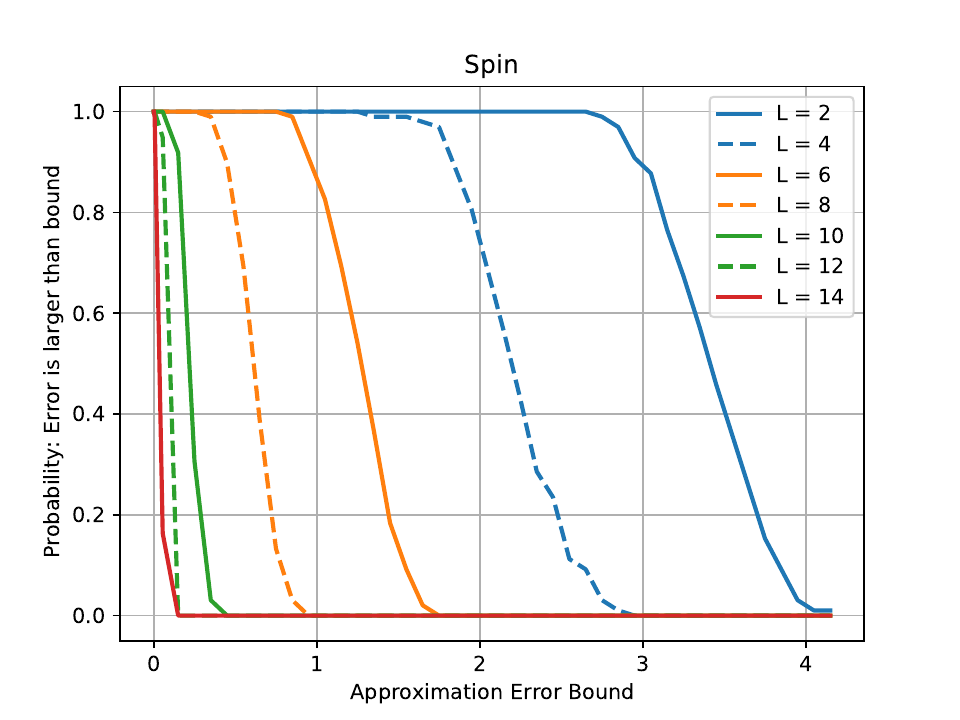}
\caption{Error probabilities for $\sequ$ and $\spin$ on $3$-qubits for different values of $L$ \cmt{that starting from a random initial point for a random unitary we obtain an approximation error larger than a desired bound. }}
\label{fig.fullconn2}
\end{figure}

\paragraph{Limited connectivity results. }

With the same settings, we investigate the effect of different connectivity patters on $\sequ$, for $n=5$ qubits. 
For $n=5$, the theoretical lower bound is 252 CNOTs. Figure \ref{fig.sparseconn} shows that $\sequ$ on all three connectivity patterns \cmt{approach} zero approximation error around its theoretical lower bound \cmt{(in the worst case, the max lines arrive at about $0.7$ error, which corresponds to a fidelity of $96\%$)}. Furthermore, the dependence of the approximation error on the length is a convex curve that is basically identical for all three patterns.

From an engineering perspective, it is quite surprising that the structures with limited hardware connectivity perform as well as \cmt{$\sequ$}, corresponding to full connectivity. However, from the ``separating parameters'' perspective that was used to derive the lower bound, this is less surprising: $\sequ$ seem to ``fill out'' the quantum circuit \cmt{and} separate the parameters equally \cmt{as} well on a line \cmt{as} on a fully connected graph.

This is in sharp contrast with the QSD which uses 20 and 444 CNOT units for $n=3$ and $n=5$, respectively, and increases in length by a multiplicative factor of 9 for the line topology. 

These experiments suggest that both $\spin$ and $\sequ$ are surjective for $L$ equal to the lower bound and that their approximation error depends favorably on the length, supporting properties (1) through (4). Regarding property (6), the experiments suggest that the length does not have to increase at all (\cmt{or perhaps very slightly, in the worst case scenario}).

\begin{remark}
\cmt{It is important to note here that the fact that the connectivity does not matter in a random unitary setting is a general statement about the overall smooth optimization landscape. We are saying that, if you take a random unitary, and a few random initial points, on average, the approximation error you obtain depends only on the number of CNOT units. This statement says that even if the different structures have different local minima and properties, with a macro-scale lens (i.e., in the worst-case), they behave very similarly. On the other hand, when one is interested in compiling a particular gate (say a Toffoli), in the least number of CNOT units, the structure plays an important role as discussed in the literature, e.g., \cite{Rev3-1,Rev3-2}, and as we will see next. In this case, the fact that connectivity \cmt{matters} is a statement about best case scenarios.}
\end{remark}

\begin{figure}
\centering
\includegraphics[width=0.475\textwidth]{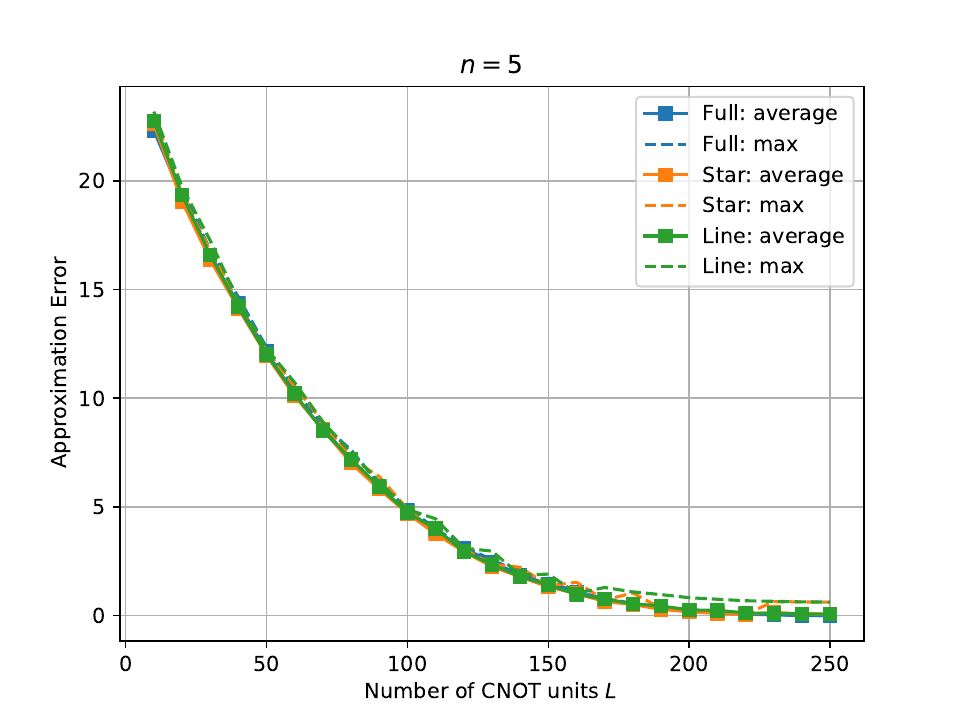}\includegraphics[width=0.275\textwidth, trim=-1cm -2cm 0 0 , clip = on]{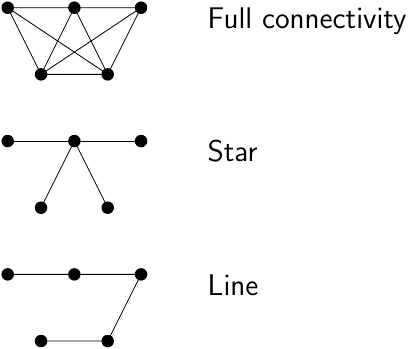}
\caption{Approximation error $n=5$ with different connectivity patterns.}
\label{fig.sparseconn}
\end{figure}

\subsection{Numerical tests: Special gates, towards [G2]}
\label{subsec:toffoli}

The only property we are left to explore in this section is property (5), meaning how close to optimal are the various structures in the case of special (often used and well-studied) gates. So, in addition to compiling random unitary matrices as efficiently as possible, we want to recover the shorter lengths that some important gates allow. For example, $k$-controlled $(n-k)$-qubit operations can be compiled with $O(n)$ gates if $k<n-1$ and $O(n^2)$ gates if $k=n-1$ \cite[Ex. 4.2]{beth2001quantum}\cite[Lems. 7.2 and 7.5]{barenco1995}. In particular, the $n$-qubit Toffoli gate, also known as the multi-controlled-X gate, can be thought of as a $k$-controlled $(n-k)$-Toffoli gate, and so can be compiled with $O(n)$ gates. Furthermore, \cite{shende-lowerbound} gives a lower bound of $2n$ CNOTs for the $n$-qubit Toffoli gate (other lower bounds are possible when considering circuits with ancillae~\cite{dmit}). The lower bound is tight for $n=3$. Qiskit~\cite{Qiskit2019} decomposes the 3-qubit Toffoli gate into 6 CNOTs and the 4-qubit Toffoli gate into 14 CNOTs\footnote{A 14 CNOT implementation can be obtained from the 20 CNOT one of~\cite{barenco1995} by substituting the controlled-Vs with their 2-CNOT implementations and applying the templates presented in~\cite{redu2008}.}. We remark that these numbers of CNOTs are significantly lower than the \cmt{TLB}, and therefore property (5) is not trivial to fulfill. 

In addition to the Toffoli gate, we also \cmt{consider} the Fredkin 3 qubit gate, as well as the 1-bit full adder (which is a 4 qubit gate). The list of important gates is by no means exhaustive, but it already gives a glimpse of how well $\sequ$ and $\spin$ work for important gates.



To test property (5), we ran our gradient descent algorithm on $\spin$ and $\sequ$ many times (we report in Table~\ref{tab:my_label} the results, as well as the number of exact compilations divided by the number of tries). We compare with the Qiskit compilations on full connectivity and on line connectivity \rev{(with the usual workflow, one would compile e.g., a Toffoli gate on the full connectivity hardware and then add swap gates to transpile it on the limited connectivity one)} \cite{Qiskit2019}.

In the case of $\sequ$, we also consider the case of permuting the CNOT units to span more possibilities: this is, at the moment, a random search permuting all the possibilities, which is unpractical for large $L$'s (for instance for $L=14$ it would amount to over \rev{$150$ million} possibilities), but gives us a glimpse that structure does matter when compiling very specific gates in the non-surjective domain. 

The results indicate that, e.g., $\sequ(18)$ and $\spin(18)$ can exactly compile a 4-qubit Toffoli, which is better than \cmt{the} 20 CNOT implementation \cmt{of~\cite{barenco1995}}. And allowing for permutations in the sequence of CNOTs, even $\sequ(14)$ can do it, which is optimal\footnote{The ease at which we have found this, despite the \rev{$14!/3!/3!/2!/2!/2!/2!$} possible structures, indicates that there may be more than one structure that can deliver an optimal compilation. \rev{Note that, in general, for $m$ types of CNOTs with corresponding quantities $L_1,\ldots,L_m$, there are $(L_1+\cdots+L_m)!/(L_1!\cdots L_m!)$ permutations.}}. The solution we have found is reported in Figure~\ref{fig.4qt}, and it uses a different sequence of CNOT gates than the one in Qiskit. 

What we observe is that there are different possibilities in compiling a certain circuit exactly, and in most cases the optimal solutions \cmt{are} different. This shows that our algorithm can be used as a tool to discover new exact compilations of special gates. 

The experiment suggests that property (5) is satisfied to a reasonable extent for both $\sequ$ and $\spin$.

\cmt{Finally, by looking at the performance of Qiskit when compiling on a line connectivity and comparing it to our $\spin$ structure, which enforces the line connectivity by design, we further appreciate the advantage of our method in terms of CNOT count.  
}

\begin{table}[]
    \centering
    \caption{Exact compilations of special gates. In \cmt{parentheses} the number of successful compilations vs. the number of trials starting with a different initial condition, and in the case of $\sequ$ with permutation, with a different permutation of the CNOT layout. \cmt{Note that the Qiskit compilation (opt level of 3 and ``sabre'' as routing and layout) is stochastic and can return a variable number of CNOTs.} $^*$After several trials, we have found a satisfactory layout, and the numbers correspond to this one.}
    \label{tab:my_label}
    \begin{tabular}{ccccccc}
    \toprule 
    Gate & \multicolumn{2}{c}{Qiskit CNOTs}  & \multicolumn{2}{c}{$\sequ$} & $\spin$ \\
    & Full conn. & Line conn. &  w/o perm. & w perm & \\ \toprule
    Toffoli 3 qubit & 6 & \cmt{7-9} & 7 (19/100) & 6 (27/100) & 8 (38/100) \\
    Toffoli 4 qubit & 14 & \cmt{36}  & 18 (4/350) & 14 (1/100)* & 18 (1/350)  \\
    Fredkin 3 qubit & 7 & \cmt{10} & 8 (55/100)  & 7 (4/100) & 8 (31/100) \\
    1-bit full adder 4 qubit & 10 & \cmt{16} & 10 (11/100) & 10 (3/100) & 14 (8/500)  \\ \bottomrule
    \end{tabular}
\end{table}

\begin{figure}[h]
    \centering
    \includegraphics[width=1.\linewidth]{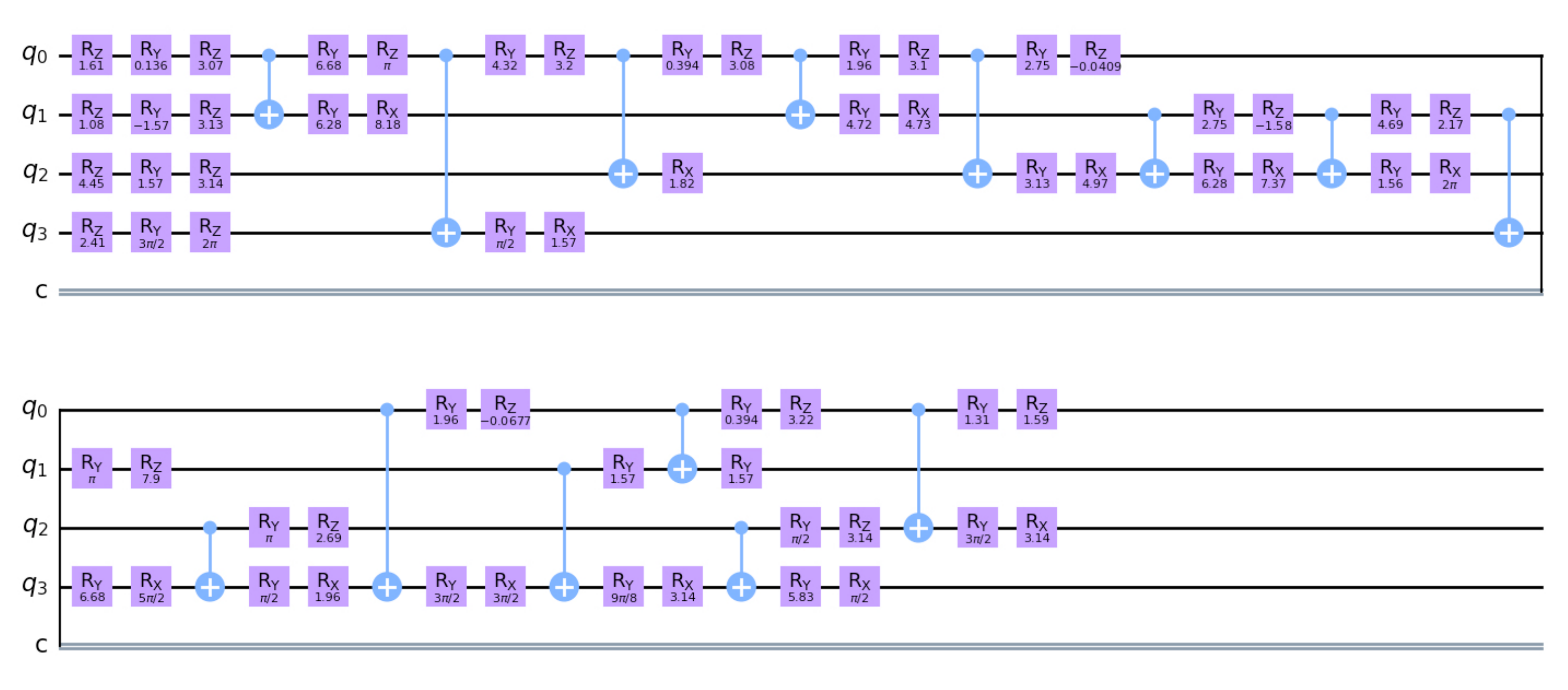}
    \caption{A different from Qiskit 4-qubit Toffoli exact compilation, obtained by permuting the CNOT units in the $\sequ(14)$ structure and running gradient descent with many initial points. }
    \label{fig.4qt}
\end{figure}


\section{Circuit compression via regularization}
\label{sec:prune}

We move now to address property (3) on compressibility in more detail. In particular, motivated by the $\cart$ structure, we notice that, while $\cart$ is provably surjective with length twice the lower bound, thus satisfying properties (1) and (2), it is not clear how to compress it and so address property (3). The purpose of this section is to develop an algorithmic technique for compressing arbitrary structures, and $\cart$ in particular. Rather than choosing which CNOTs to keep in a structure (equivalently, which CNOTs to eliminate) \cmt{\textit{a priori}}, we consider the question \cmt{of} how to design an algorithm that automatically finds the best compression for the target unitary $U$.

We approach this problems with two ideas: (1) we know that there are techniques to reduce consecutive CNOTs when no rotation gates are between them (let us call \cmt{these} compaction rules); (2) we know that, when optimizing for the angles, we can enforce sparsity of the solution by adding a pertinent regularization. 

A closer look at (2) inspires us to enforce groups of four rotation angles following the CNOTs to be zero, thereby eliminating all the rotation gates after a CNOT. This leads naturally to a group LASSO regularization, and we explore it in Section~\ref{subsec:lasso}. 

A closer look at (1) suggests special compaction rules, which we discuss in Section~\ref{subsec:group}. 

Finally, the complete algorithm starts by enforcing sparsity, eliminating zero rotation gates, compressing the structure via compaction rules, and then re-optimizing with the regular gradient descent on the compacted structure. This algorithm is discussed in Section~\ref{subsec:prune}.

\subsection{The ``synthesis'' algorithm}
\label{subsec:group}

The guiding question for this subsection is: given a list of CNOTs, can we find a shorter list of CNOTs such that the matrix product, in $\SU(2^n)$, of the first list and the second list are equal. While the research works in this area are many~\cite{patel2008optimal,aaronson2004improved,amy2017finite,maslov2018shorter,garion2020structure,bravyi2020hadamard,beth2001quantum}, we use here an adapted version of the ``synthesis'' algorithm of~\cite{patel2008optimal}. There the authors \cmt{give} an asymptotically optimal synthesis of CNOT circuits. The idea is that CNOTs on $n$-qubits can be identified with elementary matrices in $\GL(n,\Z_2)$. 
Given $A\in\GL(n,\Z_2)$, we can compute its LU decomposition in terms of elementary matrices. On the other hand, things are easier in our case, since we \cmt{only consider} downward-facing CNOTs, which can be related to $\LT(n,\Z_2)$, the group of $n$ by $n$ lower unitriangular matrices on $\Z_2$. This leads us to implement an adapted ``synthesis'' algorithm that consists of three steps: identifying CNOTs with their corresponding matrices in $\LT(n,\Z_2)$, multiplying them in $\LT(n,\Z_2)$, and reading off the locations of 1's in the product. We report the following theorem concerning this algorithm.

\begin{theorem}{\cite{patel2008optimal}}
Given an $L$-long circuit of downward-facing CNOTs, the ``synthesis'' algorithm correctly outputs an equivalent circuit of downward-facing CNOTs of length $\leq n(n-1)/2$ in $O(n^3 L)$ computations.
\end{theorem}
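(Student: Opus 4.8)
The plan is to prove the theorem in three parts, mirroring the three steps of the adapted ``synthesis'' algorithm: (i) the correctness of the correspondence between downward-facing CNOT circuits and products in $\LT(n,\Z_2)$; (ii) the length bound $\leq n(n-1)/2$; and (iii) the runtime bound $O(n^3 L)$.

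First I would set up the group-theoretic dictionary. A CNOT with control $j$ and target $k$ (with $j<k$, hence ``downward-facing'') acts on computational basis states by $|x\rangle \mapsto |x + x_j e_k\rangle$ (arithmetic over $\Z_2$), which is exactly left multiplication of the coordinate vector by the elementary matrix $E_{kj} = I + e_k e_j^\top \in \LT(n,\Z_2)$ (lower unitriangular since $k>j$). Composing CNOTs corresponds to multiplying these elementary matrices in order, and since $\LT(n,\Z_2)$ is a group closed under multiplication, the product $A = E_{k_L j_L}\cdots E_{k_1 j_1}$ is again in $\LT(n,\Z_2)$; by Theorem~\ref{thm:smooth}-type reasoning (really just linearity) the permutation this induces on basis states, and hence the unitary in $\SU(2^n)$ for $n>2$, is determined by $A$. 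Two downward-facing CNOT circuits therefore realize the same unitary iff their associated matrices in $\LT(n,\Z_2)$ coincide.

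Second, for the length bound, I would invoke the standard fact (this is the content of~\cite{patel2008optimal} specialized to the triangular case) that any $A\in\LT(n,\Z_2)$ can be written as a product of at most $n(n-1)/2$ elementary matrices $E_{kj}$, $j<k$. The cleanest argument is constructive Gaussian elimination: reduce $A$ to the identity by adding row $j$ to row $k$ only for $k>j$, clearing the strictly-lower-triangular entries column by column from left to right; there are $\binom{n}{2}$ such positions, each cleared by one elementary operation, and because we only ever add a lower-indexed row to a higher-indexed row we never disturb an already-cleared entry or leave the group $\LT(n,\Z_2)$. Running the elimination backwards expresses $A$ as the required product, and reading off the $E_{kj}$ factors in that order gives the output circuit of length $\leq n(n-1)/2$. (For the full $\GL(n,\Z_2)$ case one needs the asymptotically optimal $O(n^2/\log n)$ count of~\cite{patel2008optimal}, but here the naive triangular count already matches the claimed bound.)

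Third, for the complexity: forming each $E_{k_m j_m}$ and multiplying it into the running product in $\LT(n,\Z_2)$ costs $O(n^2)$ bit operations per CNOT (actually a single row addition, $O(n)$, but $O(n^2)$ is a safe bound), for $O(n^2 L)$ total to build $A$; the elimination phase touches $O(n^2)$ positions, each a row operation of cost $O(n)$, for $O(n^3)$; and reading off locations of 1's is $O(n^2)$. Summing gives $O(n^2 L + n^3)$, which is $O(n^3 L)$. The main obstacle — and it is a mild one — is being careful that ``downward-facing'' is genuinely preserved at every stage: that the intermediate matrices stay in $\LT(n,\Z_2)$ (so no CNOT flip is ever needed, consistent with the $j<k$ convention of $J(L)$) and that the global-phase/determinant subtlety flagged in the preliminaries for $n=2$ does not interfere — for $n>2$ every CNOT has determinant $1$ so the identification with $\SU(2^n)$ is clean, and for $n=2$ one works in $\PSU$ or normalizes, which does not affect the combinatorial statement.
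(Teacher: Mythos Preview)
Your proposal is correct and matches the paper's approach: the paper's one-line justification cites \cite{patel2008optimal} for correctness, counts the $n(n-1)/2$ strictly-lower-triangular positions for the length bound, and attributes the $O(n^3L)$ runtime to $L-1$ matrix multiplications in $\LT(n,\Z_2)$---your Gaussian-elimination argument simply fleshes this out (and for lower unitriangular matrices it coincides exactly with ``reading off the 1's,'' since adding row $j$ to row $k$ changes only the $(k,j)$ entry). The only quibble is the spurious pointer to Theorem~\ref{thm:smooth}, which concerns differentiability and is irrelevant here; the step you need is pure $\Z_2$-linearity, as you yourself note.
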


The correctness of the algorithm follows from \cite{patel2008optimal}, the length follows from the fact that there are $n(n-1)/2$ lower-triangular entries in an $n$ by $n$ matrix, and the run-time is based on $L-1$ matrix multiplications. While the algorithm may output a word of length $n(n-1)/2$, it is possible for words to be further reduced. Hence, this ``synthesis'' algorithm is not optimal, but its simplicity is appealing. The lower bound on word lengths is $\Omega(n^2/\log(n))$ for $\GL(n,\Z_n)$, and one way to obtain it is to partition the matrix into blocks, as is done in \cite{patel2008optimal}. \cmt{Note that words can also be reduced via} identities on three qubits, namely commutation rules, cancellations of two subsequent identical CNOTs, and mirror rules (see, e.g, ~\cite{rules}). \cmt{However, these have little effect for $n>3$.}

We remark that, in general, the ``synthesis'' algorithm does not respect hardware connectivity, so it will be used only in cases in which hardware connectivity is not an issue. \cmt{However, the three qubit identities do} maintain the hardware connectivity constraints of the original circuit, so they can be used in all the cases. \cmt{Note that there are recent works using Steiner trees to apply the synthesis algorithm in a way that does respect hardware connectivity: \cite{nash2020quantum,wu2019optimization,kissinger2020cnot,de2020architecture,gheorghiu2020reducing}.}

\subsection{Setting parameters to zero}
\label{subsec:lasso}

The guiding question for this subsection is: which parameters should we set to zero to get a good compressed structure via the techniques presented in the previous subsection\cmt{?} As mentioned, our approach is to enforce sparsity \cmt{of} the resulting $\btheta$ vector by pushing groups of the four-rotation angles following a CNOT to be zero (i.e., all the rotations of a given CNOT unit). This can be achieved via a group Lasso regularization~\cite{GLasso2013} (see also \cite{becker2011nesta,beck2009fast,candes2008enhancing}), as follows. The collection of rotation angles for each of the CNOT units is the vector $\btheta_{\ell} := [\theta_{3n+4\ell-3}, \ldots, \theta_{3n+4\ell}]$, with $\ell = \{1, \ldots, L\}$. Enforcing each of these vectors to be zero, amounts to adding a regularization of the form $\|\btheta_{\ell}\|_2$ for each group, and therefore solving the problem:
\begin{equation}\label{eq:aqcpg-prune}
\textbf{(AQCP-$\btheta,\lambda$)}\qquad \min_{\btheta\in [0, 2\pi)^{p}, \ct = \overline{\ct}} \, f_{\ct}(\btheta;\lambda):=\frac{1}{2}\|\Vct(\btheta) - U\|_{\mathrm{F}}^2+\lambda\sum_{\ell=1}^L \|\btheta_{\ell}\|_2,
\end{equation}
with regularization parameter $\lambda>0$, which trades off approximation error and \cmt{sparsity}. 

We can solve~\eqref{eq:aqcpg-prune} by a proximal gradient descent, as done in~\cite{GLasso2013}; although problem~\eqref{eq:aqcpg-prune} is non-convex, we have similar convergence results as \cmt{for} gradient descent, meaning that for small regularization parameters $\lambda$, we can show convergence of perturbed proximal gradient descent to a second-order stationary point~\cite{huang2019perturbed}.

In particular, proximal gradient descent amounts to computing a gradient descent \cmt{step} and then applying a proximal operator (which, in this case, is the block-wise soft-thresholding operator \cite[Eq.(7)]{mosci2010solving}). Starting from a randomly initialized $\btheta[0]$, this yields the component-wise recursion for $k=3n+4\ell-3,\ldots,3n+4\ell$, for $\ell=1,\ldots,L$, as
\begin{subequations}
\label{eq:gproxgradient-c}
\begin{eqnarray}
[\btheta_{\ell}^+]_{k} &=& \theta_k[t] + \alpha \Re~\tr\bigg[\frac{\partial}{\partial\theta_k} \Vct(\btheta[t])\dagg U\bigg] \quad \cmt{\text{for all }} k\in\{3n+4\ell-3,\ldots,3n+4\ell\} \\
\theta_k[t+1] &= &\frac{[\btheta_{\ell}^+]_{k}}{\|\btheta_{\ell}^+\|}(\|\btheta_{\ell}^+\|-\alpha\lambda)_+ \qquad \cmt{\text{for all }} k\in\{3n+4\ell-3,\ldots,3n+4\ell\}\\
\theta_k[t+1] &=& \theta_k[t] + \alpha \Re~\tr\bigg[\frac{\partial}{\partial\theta_k} \Vct(\btheta[t])\dagg U\bigg] \qquad \cmt{\text{for all }} k\in\{1, \ldots, 3n\},
\end{eqnarray}
\end{subequations}
where $(y)_+=\max\{y,0\}$.

Eq.s~\ref{eq:gproxgradient-c} are block-wise recursions: for each CNOT unit, we run the gradient descent for each angle of the unit, and then run the proximal operator. For the angles not belonging to the CNOT units, then it is business as usual.

\subsection{Compression algorithm}
\label{subsec:prune}

Integrating the ideas from the previous two subsections, we propose the following algorithm for compressing an arbitrary structure.

\begin{algorithm}[H]
\begin{algorithmic}[1]
\Require $U,\ct,\lambda$, initial condition $\btheta[0]$
\State compute $\btheta^*_{\textrm{GL}}$ via proximal gradient descent~\eqref{eq:gproxgradient-c} on $f_{\ct}(\cdot,\lambda)$
\State Eliminate all the rotation gates that have zero rotation angle
\State compress $ct$ via the ``synthesis'' algorithm
\State further compress $ct$ via CNOT identities
\State compute $\btheta^*$ via standard gradient descent~\eqref{eq:gradient} on $f_{\ct'}$, where $\ct'$ is the compressed structure
\State \Return compressed structure, $\ct'$, and corresponding angles, $\btheta^*$
\end{algorithmic}
\caption{Compression via group LASSO}
\label{algo:LASSO}
\end{algorithm}

The algorithm consists \cmt{of} running proximal gradient descent~\eqref{eq:gproxgradient-c}, starting with some given initial condition $\btheta[0]$, on the regularized cost $f_{\ct}(\cdot,\lambda)$. Then, we eliminate all the rotation gates that have zero rotation angle and apply the compaction rules to reduce the circuit. Finally, we run standard gradient descent~\eqref{eq:gradient} on $f_{\ct'}$, where $\ct'$ is the compressed structure, to compute the best parameters $\btheta^*$ for the compressed structure. 

Line 3 (i.e., ``synthesis'') can be included, when hardware connectivity constraints are not important, or not, when they are. 

\begin{figure}
    \centering
    \includegraphics[width=.475\linewidth]{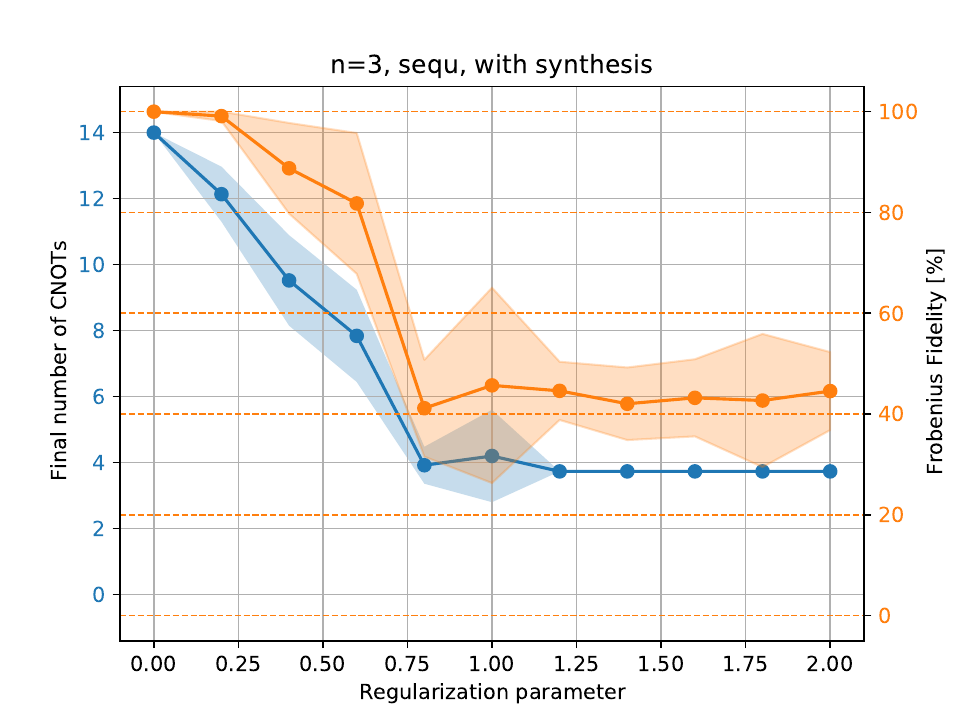}
    \includegraphics[width=.475\linewidth]{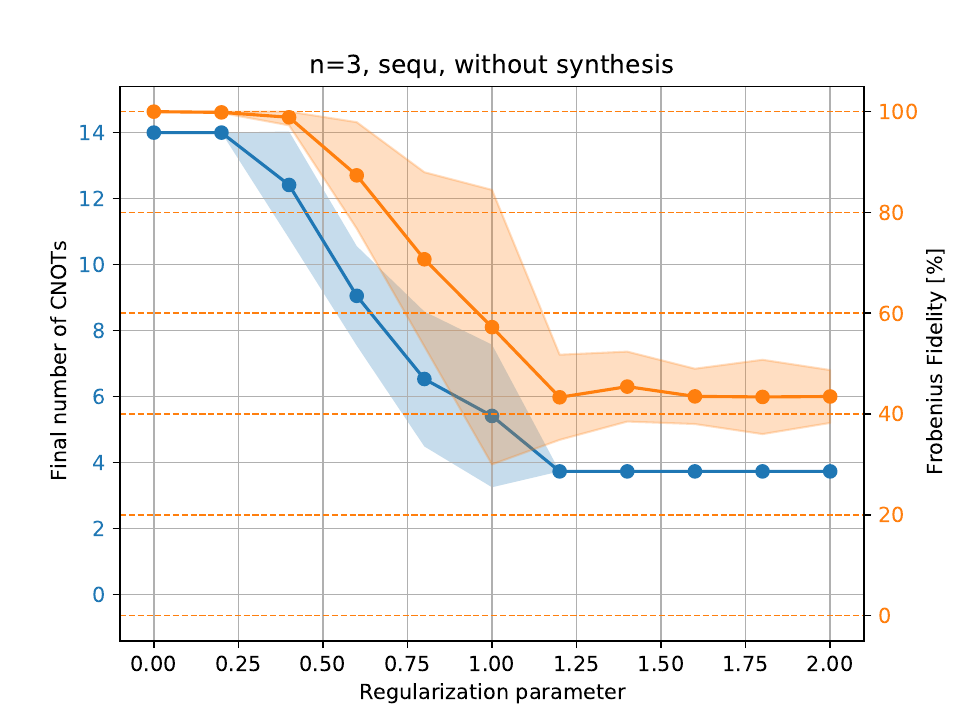}
    \includegraphics[width=.475\linewidth]{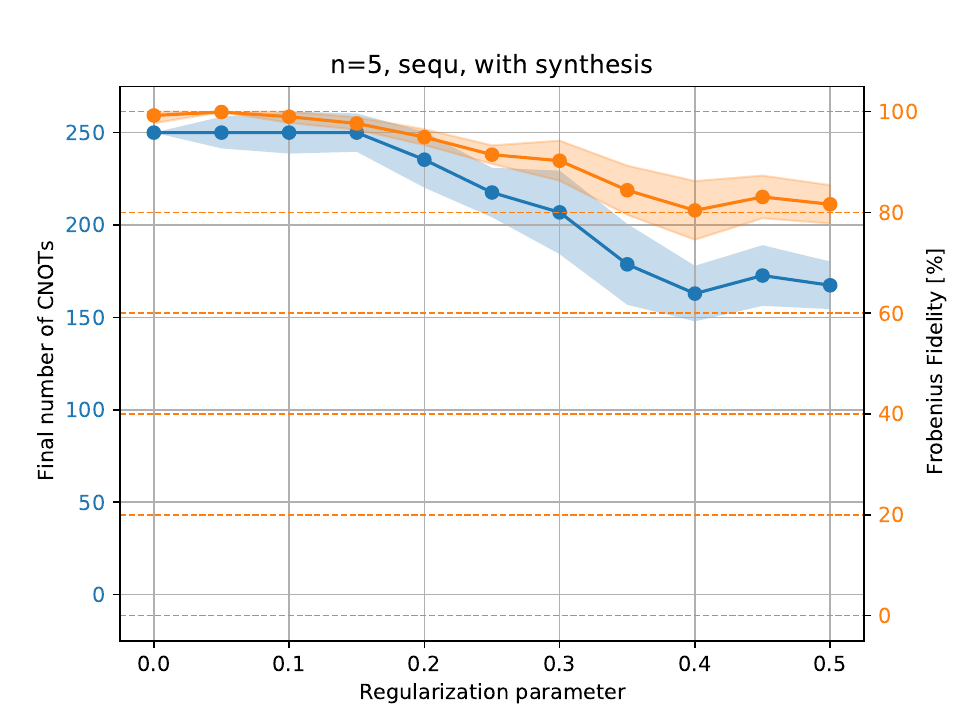}
    \includegraphics[width=.475\linewidth]{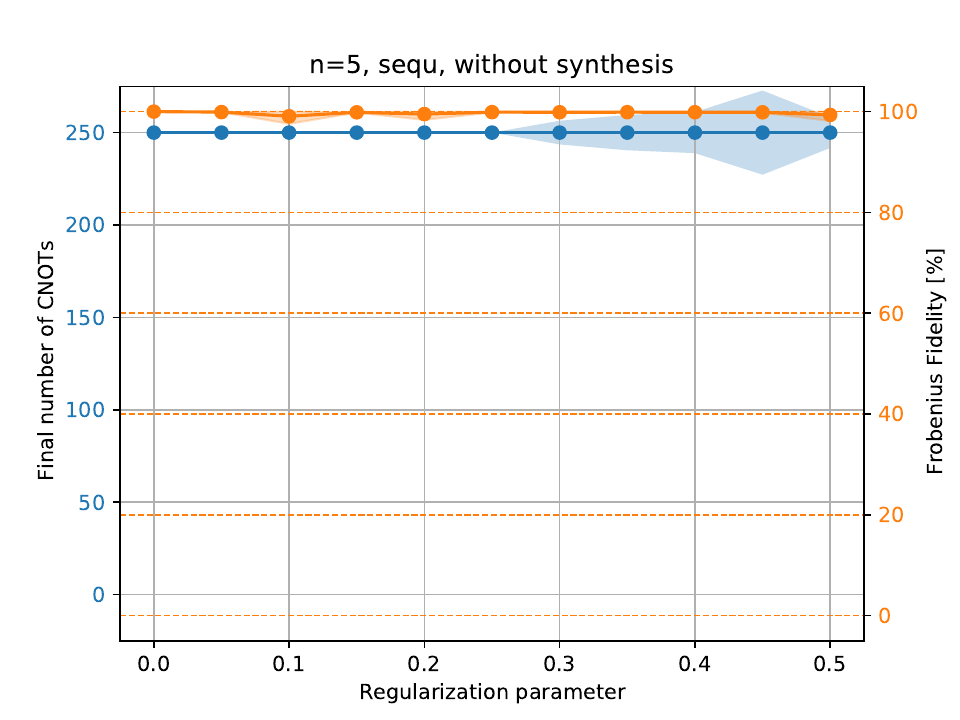}
    \includegraphics[width=.475\linewidth]{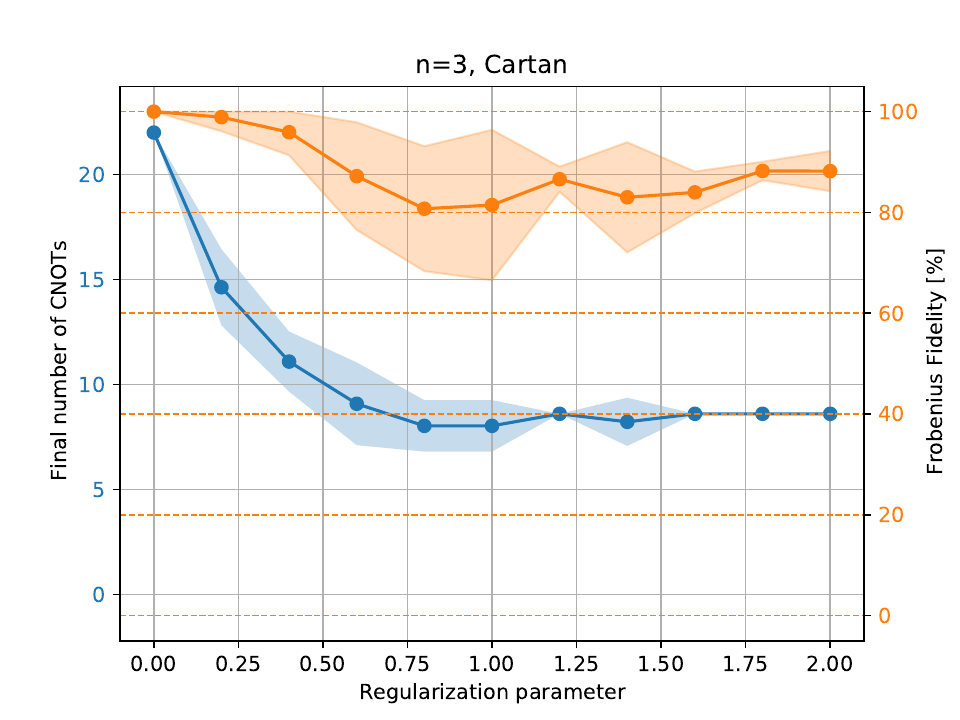}
    \includegraphics[width=.475\linewidth]{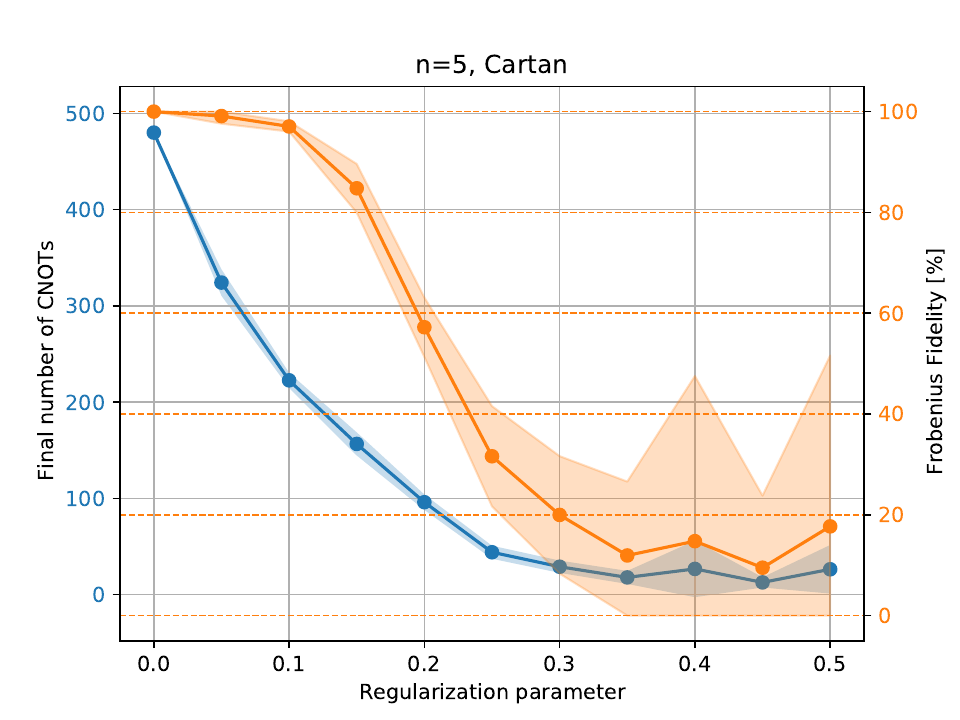}
    \caption{Final number of CNOT units and Frobenius fidelity, for $n=3$ and $n=5$ qubit circuits and different structures. We divide the cases for $\sequ$ for when ``synthesis'' is run, and when it is not. For $\cart$, since hardware constraints are not imposed, ``synthesis'' is always run. \cmt{For all graphs, the $x$-axis represents the regularization parameter $\lambda$, while the $y$-axis represents the compression obtained, in blue, and the Frobenius fidelity, in orange. The continuous line is the average, while the shaded area is one standard deviation.}}
    \label{fig.GL1}
\end{figure}

\subsection{Numerical tests: random unitary matrices, towards [G3]}

We test our compression algorithm on random unitary matrices with $\sequ$ and $\cart$ structures for both $n=3$ and $n=5$ qubit circuits (\cmt{we do not consider $\spin$ because it corresponds to the line connectivity which is not preserved under the synthesis algorithm}). In particular, we randomly initialize the iterates and consider $100$ different random unitary matrices, and we plot both mean and standard deviation of the result. 

\cmt{We report our results in Figure~\ref{fig.GL1}. We start with circuits of $L=14$ and $L=250$ for $\sequ$ and from $L=22$ and $L=528$ for $\cart$, respectively for $n=3$ and $n=5$, and we compress them with different regularization parameters. In the $x$-axis, we can see the regularization parameter $\lambda$ used, while in the $y$-axis, we can note the final number of CNOTs (in blue), as well as the Frobenius fidelity $\bar{F}_{\textrm{F}}(U,V)$ (in orange). The lines correspond to the average, and the shaded areas to one standard deviation. }

As one can appreciate, a small compression is possible for $\sequ$, especially if the ``synthesis'' algorithm is used. \cmt{Note that the three qubit reductions have little affect for $n=5$ and this is true for all $n\ge 5$.}

A better compression, with very high Frobenius fidelity $\bar{F}_{\textrm{F}}(U,V)$, is possible instead for $\cart$, showing that this structure can be tuned to be compressed to TLB ($L=14$ and $L=252$, respectively) without losing fidelity. \cmt{One can see this by looking at the second data point on the left for both $n=3$ and $n=5$, where we obtain a reduction to $L=14$ and $L < 252$ (blue curve), with no practical loss of fidelity (orange curve).} This is encouraging (meaning that the recursive Cartan decomposition can be easily compressed in practice) and supports properties (3) and (4) for $\cart$.

\subsection{Numerical tests: compressing compiled circuits, towards [G3] in Qiskit}

Finally, we move to analyze the effect of the compression algorithm to already compiled circuits in Qiskit, or other compilers. The idea here is to see how one can use the approximate quantum compiler as an add-on \cmt{to} the usual workflow, by allowing the user to trade-off accuracy and \cmt{circuit depth, as defined in Remark~\ref{rem:depth}.} 

We consider the quantum circuits of $3, 4, 5$ qubits in the \cite{database} database. We compile them in Qiskit (opt level of 3 and ``sabre'' as routing and layout) and transform them into \cmt{the parametric circuit of the present paper (see Figure~\ref{fig.expl} for an example)}. \cmt{Then we} use this compiled circuit as a warm start for our compression algorithm, Algorithm~\ref{algo:LASSO} for different $\lambda$'s. This yields the graphs in Figure~\ref{fig:ZUL}, where we can appreciate how $\lambda$ affects compression, as well as Frobenius fidelity. 

\begin{figure}
    \centering
    \includegraphics[width=1.\linewidth]{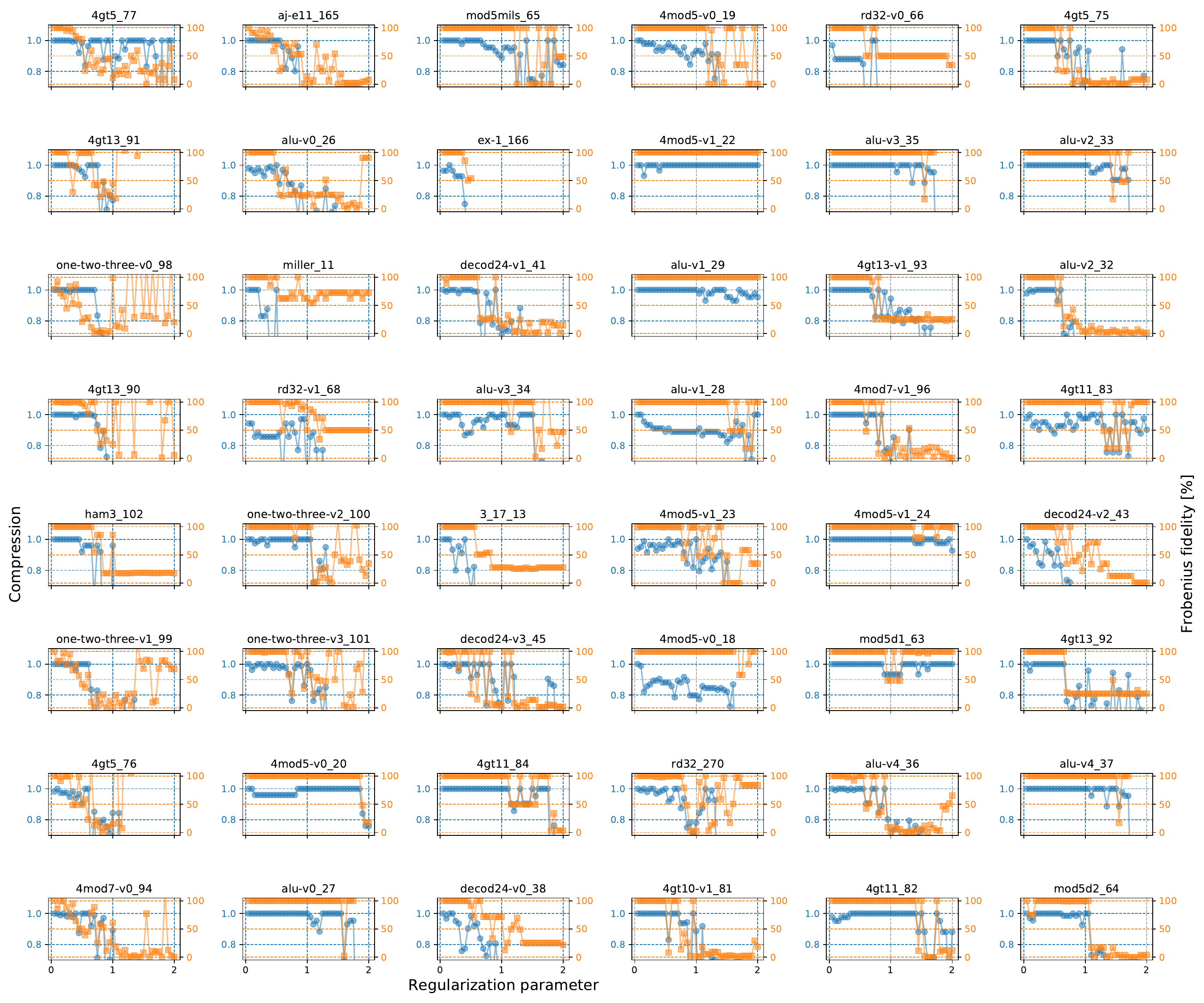}
    \caption{Compression of compiled circuits from the \cite{database} database, for various regularization parameter values. \cmt{For all graphs, the $x$-axis represents the regularization parameter $[0-2]$, while the $y$-axis the compression obtained in blue (normalized to the starting CNOT count), and the Frobenius fidelity in orange. }}
    \label{fig:ZUL}
\end{figure}

A better overview is offered by Table~\ref{tab:my_zul}, where we look at the best compression we can obtain for a Frobenius fidelity as high as $90\%$ and $80\%$, and its associated computational overhead. To compile the table, we have looked at different $\lambda$'s parameters, and we have considered the best compression with fidelity $\ge 90\%, 80\%$. The column `Depth' refers to the depth of the original circuit in the database (which can have gates not in the gate set), the Qiskit depth is the circuit depth once compiled in Qiskit and transformed \cmt{into the parametric circuit}, the compressed depth is the minimal depth that we were able to obtain with Algorithm~\ref{algo:LASSO}, with a fidelity $\ge 90\%$ (or $\ge 80\%$ -- if it is the same depth, we do not report it twice) (by varying $\lambda$), the fidelity column is the Frobenius fidelity of the maximal compressed circuit, while the overhead is the time that the computation of the latter circuit has taken. 

As one can see, \cmt{we can achieve some compression with a small loss in fidelity, and in some cases, without any loss. For some circuits, compression is quite high ($\sim 58\%$ of the Qiskit depth), for others, less so.} 

\cmt{We remark that, ultimately, the compression capabilities are a by-product of Qiskit (or others) compilation properties. If the compiler used can achieve optimal compilation, no further compression can be obtained, no matter how sophisticated the devised algorithm is. While compilation is getting better and better (new exact or heuristic algorithms are proposed at a rapid pace, see for instance~\cite{Giacomo2021}), we believe that our tool can \emph{also} be used to globally gauge if circuits can be further compressed and which part of the circuit is not optimally compiled. This has immediate practical applications in devising better compilation algorithms. For example, further studies are needed to understand what makes `miller\_11' or `decod24-v2\_43' difficult for Qiskit, and how to improve its compiler. }

Finally, more research has to be dedicated \cmt{towards} devising better strategies to select the best parameter $\lambda$ for compression, instead of a random search, and analyzing what happens with the use of different compilers rather than Qiskit. 

\begin{table}
    \centering
    \caption{Best achieved compression for the circuit of the \cite{database} database, along with their Frobenius fidelity and computational overhead. Depth is the original depth; Qiskit depth is the depth once compiled in Qiskit and transformed \cmt{into the parametric circuit}; Compr. depth is the best compression obtain\cmt{ed} with Fr. Fidelity $\ge 90\%$ or $\ge 80\%$ (in the $\ge 80\%$ column, we only report depth if different from the one in $\ge 90\%$), \cmt{while its indicated \% represents the ratio of the compressed depth w.r.t. the Qiskit depth}.}
    \label{tab:my_zul}
    \footnotesize
    \begin{tabular}{c|c|c|c|c|c|c|c}
    \toprule
    &&Qiskit &\multicolumn{2}{|c|}{Fr. Fidelity $\geq 90\%$} & \multicolumn{2}{c|}{Fr. Fidelity $\geq 80\%$} &\\
    File name & Depth & depth & Compr. depth \cmt{(\& \%)} & $\bar{F}_{\mathrm{F}}$ [\%] & Compr. depth \cmt{(\& \%)}  & $\bar{F}_{\mathrm{F}}$ [\%] & Overhead [s] \\ \toprule
    4gt5\_77 & 74 & 137 & 136 \cmt{\,(99\%)}& 93.41 &  126 \cmt{\,(92\%)}& 80.6   & 24.34 \\
4gt13\_91 & 61 & 119 & 119 \cmt{\,(100\%)}& 100.0 &  &  & 20.76 \\
one-two-three-v0\_98 & 82 & 163 & 163 \cmt{\,(100\%)} & 100.0 &  &  & 23.7 \\
4gt13\_90 & 65 & 121 & 120 \cmt{\,(99\%)}& 99.99 &  &  & 20.63 \\
ham3\_102 & 13 & 25 & 24 \cmt{\,(96\%)} & 99.99 &  &  & 1.28 \\
one-two-three-v1\_99 & 76 & 152 & 152 \cmt{\,(100\%)} & 100.0 &  &  & 23.65 \\
4gt5\_76 & 56 & 115 & 111 \cmt{\,(97\%)} & 98.75 &  &  & 20.57 \\
4mod7-v0\_94 & 92 & 175 & 171 \cmt{\,(98\%)} & 95.22 &  &  & 27.19 \\
aj-e11\_165 & 86 & 177 & 177 \cmt{\,(100\%)} & 100.0 &  &  & 36.16 \\
alu-v0\_26 & 49 & 99 & 95 \cmt{\,(96\%)} & 99.99 &  &  & 17.26 \\
miller\_11 & 29 & 65 & 38 \cmt{\,(58\%)} & 99.99 &  &  & 3.1 \\
rd32-v1\_68 & 21 & 35 & 34 \cmt{\,(97\%)}& 92.02 &  31 \cmt{\,(89\%)}& 82.43  & 4.58 \\
one-two-three-v2\_100 & 40 & 79 & 79 \cmt{\,(100\%)} & 100.0 &  &  & 13.82 \\
one-two-three-v3\_101 & 40 & 80 & 79 \cmt{\,(99\%)} & 91.84 &  &  & 16.85 \\
4mod5-v0\_20 & 12 & 25 & 25 \cmt{\,(100\%)}& 100.0 &  &  & 6.01 \\
alu-v0\_27 & 21 & 43 & 41 \cmt{\,(95\%)} & 99.99 &  &  & 7.8 \\
mod5mils\_65 & 21 & 44 & 44 \cmt{\,(100\%)}& 100.0 &  &  & 8.61 \\
ex-1\_166 & 12 & 28 & 26 \cmt{\,(93\%)} & 99.99 &  21 \cmt{\,(75\%)} & 85.34  & 1.28 \\
decod24-v1\_41 & 50 & 94 & 93 \cmt{\,(99\%)} & 99.99 &  &  & 16.32 \\
alu-v3\_34 & 30 & 60 & 60 \cmt{\,(100\%)}& 100.0 &  &  & 11.45 \\
3\_17\_13 & 22 & 45 & 37 \cmt{\,(82\%)} & 99.98 &  &  & 2.11 \\
decod24-v3\_45 & 84 & 157 & 155 \cmt{\,(98\%)}& 98.4 &  &  & 25.86 \\
4gt11\_84 & 11 & 21 & 21  \cmt{\,(100\%)} & 100.0 &  &  & 2.82 \\
decod24-v0\_38 & 30 & 62 & 52 \cmt{\,(84\%)} & 99.99 &  &  & 5.48 \\
4mod5-v0\_19 & 21 & 45 & 41 \cmt{\,(91\%)} & 95.94 &  &  & 8.76 \\
4mod5-v1\_22 & 12 & 29 & 29  \cmt{\,(100\%)} & 100.0 &  &  & 6.06 \\
alu-v1\_29 & 22 & 43 & 41 \cmt{\,(95\%)} & 99.99 &  &  & 6.98 \\
alu-v1\_28 & 22 & 45 & 39 \cmt{\,(87\%)} & 99.99 &  &  & 7.18 \\
4mod5-v1\_23 & 41 & 83 & 76 \cmt{\,(92\%)}& 99.99 &  &  & 14.79 \\
4mod5-v0\_18 & 40 & 84 & 73 \cmt{\,(87\%)} & 99.99 &  &  & 11.06 \\
rd32\_270 & 47 & 96 & 84 \cmt{\,(88\%)} & 99.81 &  81 \cmt{\,(84\%)} & 89.44  & 20.94 \\
4gt10-v1\_81 & 84 & 159 & 159  \cmt{\,(100\%)} & 100.0 &  &  & 29.35 \\
rd32-v0\_66 & 20 & 33 & 33  \cmt{\,(100\%)} & 100.0 &  &  & 3.73 \\
alu-v3\_35 & 22 & 44 & 42 \cmt{\,(95\%)} & 99.99 &  &  & 7.77 \\
4gt13-v1\_93 & 39 & 70 & 70  \cmt{\,(100\%)} & 100.0 &  &  & 14.66 \\
4mod7-v1\_96 & 94 & 164 & 164  \cmt{\,(100\%)} & 100.0 &  &  & 24.4 \\
4mod5-v1\_24 & 21 & 41 & 38 \cmt{\,(93\%)} & 97.4 &  &  & 8.68 \\
mod5d1\_63 & 13 & 30 & 30  \cmt{\,(100\%)} & 100.0 &  &  & 8.92 \\
alu-v4\_36 & 66 & 117 & 117  \cmt{\,(100\%)} & 100.0 &  &  & 20.24 \\
4gt11\_82 & 20 & 42 & 42  \cmt{\,(100\%)} & 100.0 &  &  & 7.18 \\
4gt5\_75 & 47 & 88 & 88  \cmt{\,(100\%)} & 100.0 &  &  & 16.24 \\
alu-v2\_33 & 22 & 42 & 38 \cmt{\,(90\%)} & 99.99 &  &  & 6.96 \\
alu-v2\_32 & 92 & 174 & 174  \cmt{\,(100\%)} & 100.0 &  &  & 25.21 \\
4gt11\_83 & 16 & 41 & 37 \cmt{\,(90\%)} & 99.95 &  &  & 9.18 \\
decod24-v2\_43 & 30 & 65 & 47 \cmt{\,(72\%)} & 99.99 &  &  & 5.23 \\
4gt13\_92 & 38 & 71 & 71  \cmt{\,(100\%)} & 100.0 &  &  & 14.1 \\
alu-v4\_37 & 22 & 44 & 42 \cmt{\,(95\%)} & 99.99 &  &  & 7.73 \\
mod5d2\_64 & 32 & 67 & 67  \cmt{\,(100\%)} & 100.0 &  &  & 14.08 \\
\bottomrule
    \end{tabular}
\end{table}

\section{Conclusions and open points}~\label{sec:concl}

In this paper, we have examined, in deep mathematical and numerical detail, variants of the best approximate quantum compiling problem. We have shown how to \cmt{build} hardware-aware structures and how to optimize over them. While we have presented encouraging results to support various theoretical and numerical properties, several open points are left for future research. In particular, on top of our priority list are (1) to investigate Conjecture \ref{conj:surj} and analytically determine the relationship between approximation error and number of CNOTs (which has been done for 2-qubits using the Weyl chamber \cite{peterson2020two,cross2019validating,crooks2020gates,jurcevic2020demonstration}); (2) to investigate Conjecture \ref{conj:stat}; and (3) to investigate properties (5) and (6) for $\cart$.

\section*{Acknowledgement}
This project has received funding from the Disruptive Technologies Innovation Fund (DTIF), by Enterprise Ireland, under project number DTIF2019-090.

\bibliographystyle{IEEEtran}
\bibliography{Liambib}

\end{document}